  \providecommand\BibTeX{{%
    \normalfont B\kern-0.5em{\scshape i\kern-0.25em b}\kern-0.8em\TeX}}}
\newcommand{\etal}{\textit{et al}. }
\newcommand{\ie}{\textit{i}.\textit{e}., }
\newcommand{\eg}{\textit{e}.\textit{g}., }
\useunder{\uline}{\ul}{}
\begin{document}

%# title
\title{Proxy-based Item Representation for Attribute and Context-aware Recommendation}

%# authors
\author{Jinseok Seol}
    \email{jamie@europa.snu.ac.kr}
    \affiliation{%
        \institution{Seoul National University}
        \city{Seoul}
        \country{Republic of Korea}
    }
\author{Minseok Gang}
    \email{alstjr3754@europa.snu.ac.kr}
    \affiliation{%
        \institution{Seoul National University}
        \city{Seoul}
        \country{Republic of Korea}
    }
\author{Sang-goo Lee}
    \email{sglee@europa.snu.ac.kr, sglee@intellisys.co.kr}
    \affiliation{%
        \institution{Seoul National University, IntelliSys Co., Ltd.}
        \city{Seoul}
        \country{Republic of Korea}
    }
\author{Jaehui Park}
    \email{jaehui@uos.ac.kr}
    \affiliation{%
        \institution{University of Seoul}
        \city{Seoul}
        \country{Republic of Korea}
    }

%# short authors
\renewcommand{\shortauthors}{Jinseok Seol, \etal}

% -------------------------------------------------------------------------
% -------------------------------------------------------------------------

\begin{abstract}

    Neural network approaches in recommender systems have shown remarkable success by representing a large set of items as a learnable vector embedding table.
    However, infrequent items may suffer from inadequate training opportunities, making it difficult to learn meaningful representations.
    We examine that in attribute and context-aware settings, the poorly learned embeddings of infrequent items impair the recommendation accuracy.
    To address such an issue, we propose a proxy-based item representation that allows each item to be expressed as a weighted sum of learnable proxy embeddings.
    Here, the proxy weight is determined by the attributes and context of each item and may incorporate bias terms in case of frequent items to further reflect collaborative signals.
    The proxy-based method calculates the item representations compositionally, ensuring each representation resides inside a well-trained simplex and, thus, acquires guaranteed quality.
    Additionally, that the proxy embeddings are shared across all items allows the infrequent items to borrow training signals of frequent items in a unified model structure and end-to-end manner.
    Our proposed method is a plug-and-play model that can replace the item encoding layer of any neural network-based recommendation model, while consistently improving the recommendation performance with much smaller parameter usage. 
    Experiments conducted on real-world recommendation benchmark datasets demonstrate that our proposed model outperforms state-of-the-art models in terms of recommendation accuracy by up to 17\% while using only 10\% of the parameters.

\end{abstract}

% -------------------------------------------------------------------------
% -------------------------------------------------------------------------

%# CCS
\begin{CCSXML}
<ccs2012>
    <concept>
        <concept_id>10002951.10003317.10003347.10003350</concept_id>
        <concept_desc>Information systems~Recommender systems</concept_desc>
        <concept_significance>500</concept_significance>
    </concept>
</ccs2012>
\end{CCSXML}
\ccsdesc[500]{Information systems~Recommender systems}

%# keywords
\keywords{
    Proxy-based Item Representation,
    Parameter-efficient Recommendation,
    Attribute and Context-aware Sequential Recommendation
}

%# received
% \received{20 February 2007}
% \received[revised]{12 March 2009}
% \received[accepted]{5 June 2009}

%# title
\maketitle

%# pagination
\settopmatter{printfolios=true}

%# (???) remove acm header
% \pagestyle{empty}

% -------------------------------------------------------------------------
% -------------------------------------------------------------------------

\section{Introduction}

    Recommender systems have evolved from collaborative filtering (CF) \cite{sarwar2001item} to machine learning \cite{ning2011slim, rendle2010factorization, guo2017deepfm} and deep learning-based models \cite{batmaz2019review, wang2019sequential}.
    In model-based CF \cite{koren2009matrix}, users and items are represented as latent vectors via learnable embedding matrix, which is implemented as a full look-up table, where each row corresponds to a unique user or item vector.
    Such vector representation approach can transform the recommendation problem into a set of vector arithmetic, assuming that the latent vectors are well-trained through the training data distribution \cite{tran2019improving}.
    To train the representations effectively and achieve higher recommendation performance, recent studies have introduced deep learning-based models \cite{he2017neural, xue2017deep, vaswani2017attention} and sequential recommendation models \cite{quadrana2017personalizing, kang2018self, sun2019bert4rec, petrov2022systematic} as well as attribute \cite{he2016vbpr, zhou2020s3} and context-aware \cite{guo2017deepfm, rashed2022context} models.

    The concept of item embedding matrix also appears in the Natural Language Processing (NLP) field to train the distributional semantics of words \cite{mikolov2013distributed}.
    However, in contrast to the data distributions in NLP, recommendation datasets exhibit distinct and unique characteristics, which incur two significant limitations that need to be addressed.
    Firstly, the item frequency in recommendation datasets follows a long-tail distribution \cite{yin2012challenging}.
    The item embedding matrix, which is unaware of such data distribution, cannot accurately reflect the proper training signals for the infrequent items \cite{ginart2021mixed, movshovitz2017no}, leading to degraded recommendation performance \cite{ardalani2022understanding}.
    Secondly, the number of items may increase indefinitely.
    For a real-world recommendation scenario, a large set of new items is consistently added to the system  \cite{pfadler2020billion, covington2016deep}
    and the item embedding matrix, where most of the model parameters are stored \cite{ardalani2022understanding}, must scale up accordingly to accommodate them.
    For these problems being the case, it is necessary to devise a new methodology that can effectively replace the item embedding matrix and resolve such limitations.

    We further investigate the problem of infrequent item embeddings thoroughly.
    First of all, during the training process of a recommendation model, items appear in three different scenarios: a user profile that is represented as a list of interacted items, a target item for the system to predict, and a set of negative items that the user has not interacted obtained via negative sampling. 
    It is evident that the parameter update for a specific item embedding, except for that of negative sampled items, is directly proportional to the item's occurrence frequency.
    Consequently, infrequent items do not receive as many training opportunities as frequent items do and, thus, their embeddings hardly converge to their optimal states throughout the training.
    We have conducted an empirical analysis of such phenomena under attribute and context-aware settings \cite{kang2018self, rashed2022context}. 
    Specifically, we have replaced a portion of infrequent items with a shared unknown token and measured the model performance.
    In Table \ref{tab:unk}, when the infrequent items are replaced by the unknown token, which can be seen as a removal process, the recommendation accuracy in terms of NDCG@10 increases.
    Such result clearly demonstrates that the poorly learned infrequent item embeddings harm the overall performance as mentioned earlier.
    However, such improvement comes at the cost of losing global coverage of recommended items, which accounts for item diversity, and highlights that the insufficient training signals for infrequent items need to be addressed.

    Previous studies that aim to resolve the aforementioned issues can be divided into two branches.
    Firstly, to mitigate the negative impact of infrequent item embeddings on the model performance while reducing the parameter usage, \cite{chen2023clustered} employed a shared embedding through clustering and \cite{ginart2021mixed} reduced the embedding dimension of infrequent items.
    However, these methods are not suitable for datasets where content information (\ie item attributes and contexts) is crucial (\eg Fashion domain, showing large performance gap when content information is not utilized).
    Secondly, \cite{wei2021contrastive, zhu2021learning, chen2022generative} proposed methods that allow the embeddings of infrequent items to mimic those of frequent items using their content information.
    However, they have separate model structures (\eg mimicking network) and learning strategies (\eg warm-up stage), which makes it difficult to train in an end-to-end fashion, resulting in unstable hyper-parameter tuning.

    In this paper, we propose a novel \emph{proxy-based item representation} model that represents each item as a weighted sum of learnable proxy embeddings by leveraging content information, while incorporating the learning process of both frequent and infrequent items into a single framework in an end-to-end manner.
    Generally, a \emph{proxy} refers to a model or a vector learned in place of the original training objective to improve the training efficiency or performance in deep learning \cite{caron2020unsupervised, movshovitz2017no}.
    We reduce the item embedding matrix into two linearly combinable components: the proxy weighting network and the proxy embedding.
    Our model represents an item as the weighted sum of learnable proxy embeddings, where the weights are primarily determined by content information.
    As opposed to the infrequent items, frequent items possess abundant collaborative signals.
    To further incorporate such signals whenever available, we add learnable bias terms to the proxy weights.
    By such means, the model can learn hybrid collaborative signals in a single representation space that encompasses not only item ID-to-ID but also ID-to-content and content-to-content relations.

    The core concept of our proposed model is illustrated in Figure \ref{fig:model-concept}.
    We apply a softmax function on the proxy weights to enforce the weights sum up to 1.
    This approach ensures each item representation resides inside a simplex where vertices are well-trained proxy embeddings.    
    Moreover, since the gradient is formed in a unit of proxy embeddings that are shared across all items, the training becomes stable and fast while allowing infrequent items to borrow the training signal of frequent items.
    Additionally, that newly added item can be computed compositionally via proxy embeddings based on their content information prevents the indefinite increase of parameters, resulting in parameter efficiency.
    As a result, our method alleviates the inadequate parameter updates of those items that exist in long-tail distribution, leading to performance improvement with much smaller parameter usage.

    We conduct experiments on real-world recommendation benchmark datasets, namely Amazon Review datasets \cite{mcauley2015image} and MovieLens dataset.
    Each dataset has its distinct data distribution and item attributes, along with context and sequential information.
    Our proposed method is a plug-and-play model that can replace the item encoding layer of any neural network-based recommendation model, and the experimental results show that leveraging our model consistently improves the performance across all datasets.

% --- table start ---
\begin{table}[tp]
    \caption{Performance and diversity comparison of infrequent item removal ratio on Fashion}
    \label{tab:unk}
    \begin{tabular}{cc|c|cc}
        \toprule
            Model & Removal & \#params (M) & NDCG@10 & Diversity \\
        \midrule
                & 0\%  & 45.8 & 37.9\% & 15.0\% \\
                & 25\% & 35.1 & 39.1\% & 14.9\% \\
            SASRec++
                & 50\% & 24.5 & 39.2\% & 14.8\% \\
                & 75\% & 13.9 & 40.2\% & 13.1\% \\
                & 90\% &  7.5 & 41.6\% & 11.6\% \\
        \midrule
                & 0\%  & 25.1 & 42.6\% & 16.2\% \\
                & 25\% & 19.7 & 42.5\% & 15.3\% \\
            with \textbf{PIR}
                & 50\% & 14.4 & 42.5\% & 16.2\% \\
                & 75\% &  9.1 & 42.4\% & 16.2\% \\
                & 90\% &  5.9 & 42.4\% & 16.0\% \\ 
        \bottomrule
    \end{tabular}
\end{table}
% --- table end ---

% --- figure start ---
\begin{figure*}[tp]
    \centering
    \includegraphics[width=0.8\linewidth]{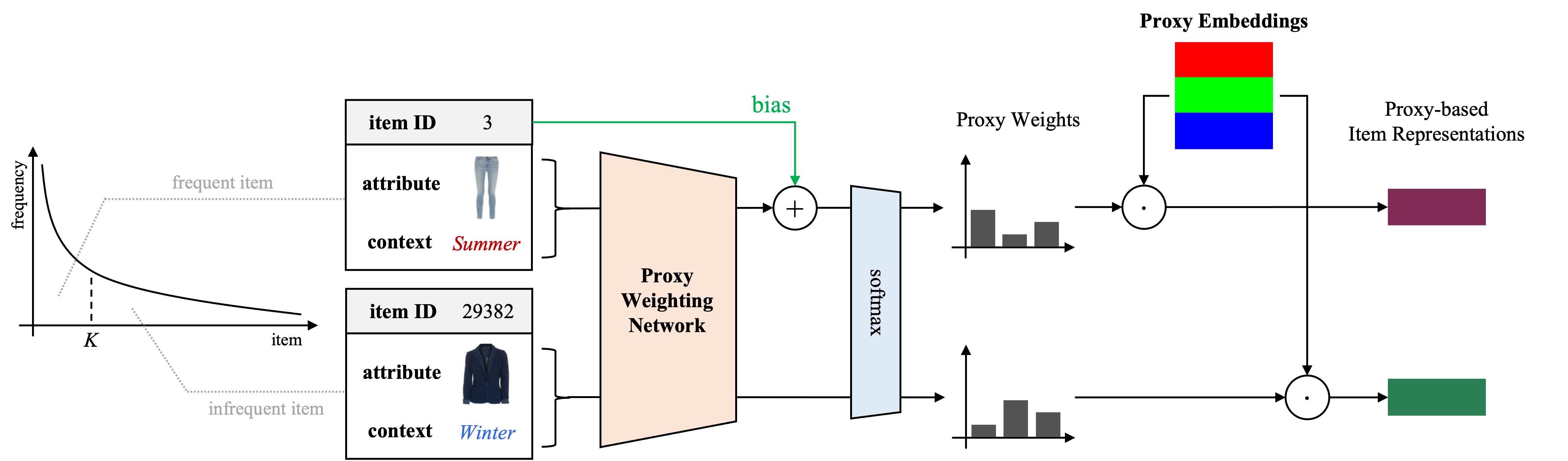}
    \caption{Proxy-based item representation model. Items are represented as a weighted sum of learnable proxy embeddings.}
    \label{fig:model-concept}
\end{figure*}
% --- figure end ---

% -------------------------------------------------------------------------
% -------------------------------------------------------------------------

\section{Related Work}

    Our model is inspired by three research branches: attribute and context-aware sequential recommendation (ACSR) models \cite{zhou2020s3, petrov2022effective, petrov2022systematic, chen2022intent}, parameter-efficient recommendation models \cite{ginart2021mixed, kang2021learning, liu2021learnable}, and proxy learning models \cite{kim2020proxy, cho2021unsupervised} involving clustering techniques in self-supervised learning \cite{caron2020unsupervised, li2020prototypical}.

    The most important feature in the recommendation is the user-item interaction history.
    However, in certain datasets where interaction data is sparse or item attributes are crucial, leveraging content and contextual information becomes vital \cite{guo2017deepfm, zhou2020s3, wei2021contrastive, seol2022exploiting}.
    For example, the current state-of-the-art ACSR model CARCA \cite{rashed2022context} achieves a performance increase of up to 50\% in terms of recommendation accuracy when compared to other baseline models that do not reflect all of the following: the attribute, the context, or the sequential information.
    Furthermore, in the cold-start scenarios where the interaction data are extremely rare, \cite{wei2021contrastive, zhu2021learning, chen2022generative} proposed two-structured methods that aim to mimic the embeddings of frequent items by employing content information to boost model performance.

    To address the enormous training parameter issue \cite{ardalani2022understanding}, studies for parameter-efficient recommendation models have been conducted recently \cite{kang2021learning, ginart2021mixed, ardalani2022understanding, liu2021learnable, yan2021learning, zhao2020memory}.
    Most studies focus on devising new methodologies to encode each item into a promising vector representation with fewer parameters \cite{kang2021learning, ginart2021mixed, zhao2020memory, chen2023clustered} or reduce the model size without hurting the performance \cite{shen2021umec, yan2021learning}.
    In the mixed-dimension model \cite{ginart2021mixed}, the item embedding matrix is split into two parts: frequent and infrequent items, where the embeddings of the latter are factorized into low-rank matrices.
    Such method has the potential to alleviate the negative impact of infrequent item embeddings on the model performance while reducing the overall parameter usage.
    In the hash-based method \cite{kang2021learning}, the embedding matrix is compressed through multiple hash functions, representing the item embeddings as the combination of hash values.
    The clustering-based method of \cite{chen2023clustered} also aims to resolve inherent issues of the embedding matrix by using the shared embeddings.
    However, such methods does not reflect essential recommendation properties, such as the attributes of items and the recommendation contexts, limiting performance improvement in certain datasets that heavily resort to item content information (\eg Fashion domain).

    The concept of utilizing proxies has been applied and verified in many studies from various fields with similar motivations: infrequently occurring objects borrowing the training signals from frequently occurring objects.
    Among numerous recommendation models, ProxySR \cite{cho2021unsupervised} employs user proxy embeddings to augment information for item sequences with a relatively short length in a session-based setting.
    However, ProxySR explicitly selects one proxy user at a time, which is very different from our proposed model that represents items by combining multiple proxy item embeddings.
    Concretely, proxies in ProxySR represent prototypical users, while proxies in our model represents cluster centroids of items that serve as necessary components for item representations.
    Despite such differences, ProxySR successfully demonstrates that the proxy embeddings serve as an external memory \cite{dong2020mamo} of collaborative signals \cite{li2021lightweight, fan2021continuous} and provide a performance improvement.

    In computer vision, proxies are also utilized as cluster centroids to improve training efficiency and performance in contrastive learning, namely self-supervised learning \cite{caron2020unsupervised, caron2021emerging, li2020prototypical, zhang2021supporting, cho2021masked} and metric learning for image retrieval \cite{wieczorek2021unreasonable, movshovitz2017no, kim2020proxy, yao2022pcl}.
    Contrastive learning involves positive and negative sampling, where the complexity of image pairs or triplets becomes enormous \cite{kim2020proxy}.
    This leads to inconsistent parameter updates since some positive samples may starve by chance.
    To address such problem, training is performed between proxies rather than image samples.
    In this setting, a proxy represents a centroid of an unsupervised cluster or a representative vector of a supervised labeled group.
    The individual images are then trained via their corresponding proxy.
    Since the number of proxies is much smaller than the number of images, the network converges faster, and the training becomes more stable while improving the overall performance \cite{kim2020proxy}.
    Regarding the recommendation, the infrequent items often undergo insufficient parameter updates, corresponding to the aforementioned inconsistent parameter update of positive samples in the contrastive learning.
    Therefore, employing proxies that behave similarly to cluster centroids can alleviate such training opportunity starvation of infrequent items.

% -------------------------------------------------------------------------
% -------------------------------------------------------------------------

\section{Model}

    \subsection{Problem Definition}
    
        Let $U$ and $I$ be the set of all users and all items, respectively.
        The \emph{user item sequence} $S_u$ of a user $u \in U$ is defined as $S_{u} = (i_{1}, i_{2}, ..., i_{|S_{u}|})$ where $1 \leq t \leq |S_{u}|$ and $i_{t} \in I$.
        We denote a subsequence that uses only the first $t'$ items as $S_{u, (t \leq t')}$.
        We define negative items as $I_{u}^{-} = I \setminus I_{u}^{+}$ where $I_{u}^{+}$ is a set of items constituting $S_{u}$.
        The goal of \emph{sequential recommendation} (SR) is to predict subsequent items with a given user item sequence \cite{kang2018self}.
        For every $u \in U$ and each $1 \leq t' \leq |S_{u}| - 1$, the objective is to maximize $P(i_{t' + 1} | S_{u, (t \leq t')})$, when compared to negative items $N_{u} = \{i_{j}^{-}\}_{j=1}^{|N_{u}|}$ that are randomly sampled from $I_{u}^{-}$.
        We assume that each item has auxiliary information in addition to its interaction history, which is referred to as \emph{item attributes}.
        The item attributes exist in various forms, such as images, titles, descriptions, keyword tags, etc.
        Here, we assume that preprocessing for attributes is done, and we can use it as a $d_{\text{A}}$-dimensional vector $f_{i_{t}}$ for the item $i_{t}$.
        Meanwhile, a \emph{context} is the contextual information, generally meaning time, place, situation, etc., that can affect the preference independently of users and items.
        Similar to attributes, we assume that the context is preprocessed as a $d_{\text{C}}$-dimensional vector $c_{t}$.
        We extend SR into \emph{attribute and context-aware sequential recommendation} (ACSR) as follows: maximize $P(i_{t' + 1} | S_{u, (t \leq t')}^{\text{AC}}, c_{t' + 1})$ where $S_{u}^{\text{AC}} = \{(i_{t}, f_{i_{t}}, c_{t}) | 1 \leq t \leq |S_{u}^{\text{AC}}|\}$.

    \subsection{Background}

        The general architecture of an ACSR model can be divided into three parts: the item encoding layer, the sequence encoding blocks, and the item scoring layer.
        We explain each part based on the design of CARCA \cite{rashed2022context}, where it has (1) a strong item encoding layer that utilizes item attributes and contexts, (2) self-attention-based sequence encoding blocks, and (3) a state-of-the-art item scoring layer using a cross-attention.

        \subsubsection{Item Encoding Layer}

            The main function of this layer is to extract the \emph{item vector} $v_{i_{t}} \in \mathbb{R}^{d}$ given the input $(i_{t}, f_{i_{t}}, c_{t})$, where $d$ denotes the latent dimension.
            First, the individual \emph{item embedding} for the item $i_{t}$ is assigned as $\text{IE}_{i_{t}}$, from the full-item look-up table $\text{IE} \in \mathbb{R}^{|I| \times d_{\text{IE}}}$, where $d_{\text{IE}}$ denotes the dimension of the item embedding.
            These three vectors $(\text{IE}_{i_{t}}, f_{i_{t}}, c_{t})$ are then passed to a shallow neural network to obtain the final item vector:
            \begin{equation}
                 z_{i_{t}} = \sigma^{\text{AC}}(\text{cat}_{\text{col}}(f_{i_{t}}, c_{t})W^{\text{AC}} + b^{\text{AC}}),
            \end{equation}
            \begin{equation}
                 v_{i_{t}} = \sigma^{\text{item}}(\text{cat}_{\text{col}}(\text{IE}_{i_{t}}, z_{i_{t}})W^{\text{item}} + b^{\text{item}}),
            \end{equation}
            where $W^{\text{AC}} \in \mathbb{R}^{(d_{A} + d_{\text{C}}) \times d_{\text{AC}}}$, $b^{\text{AC}} \in \mathbb{R}^{d_{\text{AC}}}$, $W^{\text{item}} \in \mathbb{R}^{(d_{\text{IE}} + d_{\text{AC}}) \times d}$ , $b^{\text{item}} \in \mathbb{R}^{d}$ are weights and biases for corresponding network, $\sigma^{\text{AC}}$, $\sigma^{\text{item}}$ are activation functions, and $\text{cat}_{\text{col}}$ denotes the column-wise vector concatenation.

        \subsubsection{Sequence Encoding Blocks}

            With the advent of self-attention blocks, the sequence encoding blocks have been commonly implemented with Transformer \cite{vaswani2017attention} architecture that is superior in terms of both performance and computational efficiency to that of the past, namely RNN or CNN-based methods \cite{quadrana2017personalizing, kang2018self}.
            An \emph{attention} (Attn) is defined as follows: for three given sequences of vectors $Q \in \mathbb{R}^{L_{Q} \times d_{Q}}$, $K \in \mathbb{R}^{L_{K} \times d_{K}}$, $V \in \mathbb{R}^{L_{V} \times d_{V}}$,
            \begin{equation}
                 \text{Attn}(Q, K, V) = \text{softmax} \left( (Q K^{\top})/\sqrt{d_{QK}} \right) V,
            \end{equation}
            where $d_{Q} = d_{K} = d_{QK}$ and $L_{K} = L_{V} = L_{KV}$.
            Given the number of heads $H$ that is the divisor of $d_{Q}$, $d_{K}$, $d_{V}$, we can create separate linear projection layers so that each head can handle different representations, often referred \emph{multi-head attention} (MHA):
            \begin{equation}
                 \text{MHA}(Q, K, V) = \text{cat}_{\text{col}} \left( \left [ \text{Attn} (Q W_{h}^{Q}, K W_{h} ^{K}, V W_{h}^{V}) \right ]_{h=1}^{H} \right),
            \end{equation}
            where $W_{h}^{Q} \in \mathbb{R}^{d_{Q} \times d_{Q}/H}$, $W_{h}^{K} \in \mathbb{R}^{d_{K} \times d_{K}/H}$, $W_{h}^{V} \in \mathbb{R}^{d_{V} \times d_{V}/H}$.
            The \emph{self-attention} assumes that $Q$, $K$, and $V$ are the same, and it is used to encode the complex information of the input sequence.
            In our problem setting, we apply self-attention to item vectors $V_{u} = [v_{i_{1}}^{\top}, v_{i_{2}}^{\top}, ..., v_{i_{t'}}^{\top}]^{\top} \in \mathbb{R}^{|S_{u, (t \leq t')}^{\text{AC}}| \times d}$:
            \begin{equation}
                 \hat{V}_{u} = \text{MHA}(V_{u}, V_{u}, V_{u}).
            \end{equation}
            Following the architecture of Transformer, the results above are passed to a two-layered point-wise feed-forward network (PWFF), where it is defined as follows: for a matrix $X \in \mathbb{R}^{L_{X} \times d_{X}}$ denoting a set of vectors,
            \begin{equation}
                 \text{PWFF}(X) = \text{cat}_{\text{row}}\left(\left[ \sigma(X_{r} W^{(1)} + b^{(1)})W ^{(2)} + b^{(2)} \right]_{r=1}^{L_{X}}\right),
            \end{equation}
            where $\sigma$ is an activation function, $W^{(1)}, W^{(2)} \in \mathbb{R}^{d_{X} \times d_{X}}$, $b^{(1)}, b^{(2)} \in \mathbb{R}^{d_{X}}$ are weights and biases, and $\text{cat}_{\text{row}}$ is a row-wise vector concatenation.
            PWFF is an additional layer that helps the model understand more complex relationships and provides further non-linearity.
            Assuming an additive residual connection, the self-attention block can be stacked into multiple blocks as follows:
            \begin{equation}
                 \hat{V}_{u}^{(b)} = \text{MHA}(V_{u}^{(b)}, V_{u}^{(b)}, V_{u}^{ (b)}),
            \end{equation}
            \begin{equation}
                 V_{u}^{(b + 1)} = \text{PWFF}(\hat{V}_{u}^{(b)}) + \hat{V}_{u}^{(b) },
            \end{equation}
            where $V_{u}^{(1)} = V_{u}$, and $(b)$ denoting the $b$-th block, up to the total number of $B$ attention blocks.
            The final vectors $V_{u}^{(B)}$ from the sequence encoding blocks are dubbed \emph{latent sequence vectors}.
            Unlike many other attention-based models, since the temporal information is often explicitly given in the form of a context \cite{seol2022exploiting, rashed2022context}, we do not employ positional embeddings.

        \subsubsection{Item Scoring Layer}

            After extracting the latent sequence vectors, the \emph{item scoring layer} calculates preference scores to rank candidate items, producing recommendation output.
            There are two major methodologies for the layer.
            
            \paragraph{Inner Product (IP)}
                Similar to MF, this method uses the inner product (IP) value between a user vector and an item vector to obtain a preference score.
                Since the IP method is commonly implemented via negative item sampling, it is widely used due to its great advantage in computational efficiency, especially in terms of memory and parameters \cite{mikolov2013distributed, kang2018self}.
                The sampled candidate items are encoded into \emph{candidate item vectors} $C_{u} = [v_{i_{t'+1}}^{\top}, v_{i_{ 1}^{-}}^{\top}, v_{i_{2}^{-}}^{\top}, ..., v_{i_{|N_{u}|}^{-}}^{\top}]^{\top} \in \mathbb{R}^{(|N_{u}| + 1) \times d}$ via the same item encoding layer used in the user-item sequence encoding, with shared parameters.
                The training method of IP can be categorized into two losses \cite{chen2020simple}:
                \begin{enumerate}
                    \item Binary cross-entropy loss (BCE): This method trains positive and negative items independently by treating the IP value as the logit of binary classification.
                    \item Normalized temperature-scaled cross-entropy loss (NT-Xent) \cite{chen2020simple, sohn2016improved}: After the $L_{2}$-normalization to make IP as cosine similarity, a temperature-scaled softmax is applied to perform $(N+1)$-way classification. This method trains the positive score to be relatively higher than the negative score.
                \end{enumerate}
            
            \paragraph{Cross-Attention (CA)}
                Unlike the IP, the sequence latent vector is not treated as the user vector at step $t$ in CA \cite{rashed2022context}.
                The method uses candidate item vectors $C_{u}$ as query and sequence latent vectors $V_{u}^{(B)}$ as both key and value for another multi-head attention layer.
                After applying a multiplicative residual connection, the results are passed to a scoring layer $\phi^{\text{score}}$ that produces the preference scores:
                \begin{equation}
                     \hat{C}_{u} = C_{u} \odot \text{MHA} (C_{u}, V_{u}^{(B)}, V_{u}^{(B)}),
                \end{equation}
                \begin{equation}
                     Y = \phi^{\text{score}}(\hat{C}_{u}) = \sigma^{\text{score}} ( \hat{C}_{u} W^{\text{score}} + b^{\text{score}} ),
                \end{equation}
                where $W^{\text{score}} \in \mathbb{R}^{d}$ is a scoring weight vector, $b^{\text{score}} \in \mathbb{R}$ is a scoring bias, and $\sigma^{\text{score}}$ is the activation function.
                The score is treated as the logit of binary classification, similar to BCE from the IP method.
                When CA decides whether or not to recommend each candidate item, it explicitly considers the entire item sequence via the cross-attention mechanism, leading to superior performance in most cases.

    \subsection{Proxy-based Item Representation}

        Our proposed model, the \textbf{P}roxy-based \textbf{I}tem \textbf{R}epresentation (PIR) method, replaces the item embeddings $\text{IE}_{i_{t}}$ in the item encoding layer.
        First, we introduce the learnable \emph{proxy embeddings}, namely $P = [p_{1}^{\top}, p_{2}^{\top}, ..., p_{n_{\text{proxy}}}^{\top}]^{\top} \in \mathbb{R}^{n_{\text{proxy}} \times d_{\text{proxy}}}$.
        Here, the number of proxy embeddings $n_{\text{proxy}}$ and its dimension $d_{\text{proxy}}$ are hyper-parameters.
        We calculate the appropriate weights for each proxy embedding by item attribute $f_{i_{t}}$ and context $c_{t}$ to produce a \emph{proxy-based item representation} $\text{PIR}_{i_{t}} \in \mathbb{R}^{d_{\text{proxy}}}$ that is a weighted sum of proxy embeddings.
        Specifically, we use a 2-layered neural network $\varphi$ to compute the unnormalized weight for each proxy embedding:
        \begin{equation}
            w_{i_{t}}' = \sigma^{\varphi, (1)} ( \text{cat}_{\text{col}}(f_{i_{t}}, c_{t} ) W^{\varphi, (1)} + b^{\varphi, (1)} ),
        \end{equation}
        \begin{equation}
            \begin{split}
                w_{i_{t}} &= \sigma^{\varphi, (2)} ( w_{i_{t}}' W^{\varphi, (2)} + b^{\varphi, (2)} ) \\
                &= \varphi (f_{i_{t}}, c_{t}) \in \mathbb{R}^{n_{\text{proxy}}},
            \end{split}
        \end{equation}
        where $W^{\varphi, (1)} \in \mathbb{R}^{(d_{\text{A}} + d_{\text{C}}) \times d_{\varphi}}$, $b^{\varphi, (1)} \in \mathbb{R}^{d_{\varphi}}$ are the weight and bias for the first layer, and $W^{\varphi, (2)} \in \mathbb{R}^{d_{\varphi} \times n_{\text{proxy}}}$, $b^{\varphi , (2)} \in \mathbb{R}^{n_{\text{proxy}}}$ are the weight and bias for the second layer.
        $\sigma^{\varphi, (1)}$ and $\sigma^{\varphi, (2)}$ can be any appropriate activation functions, but we use LeakyReLU for $\sigma^{\varphi, (1)}$ and the identity function for $\sigma^{\varphi, (2)}$, which were empirically chosen through hyper-parameter tuning.
        After normalizing through a softmax layer, $\text{PIR}_{i_{t}}$ is computed as the weighted sum of proxy embeddings:
        \begin{equation}
             \hat{w}_{i_{t}} = \text{softmax}(w_{i_{t}}),
        \end{equation}
        \begin{equation}
             \text{PIR}_{i_{t}} = \hat{w}_{i_{t}} P.
        \end{equation}
        The final item vector $v_{i_{t}}$ is then calculated by appending the item attribute and context information to the newly created representation.
        Slightly different from the process described in CARCA, we introduce an additional 1-layer neural network for attribute vectors to provide an additional non-linearity to the attribute information.
        \begin{equation}
             f_{i_{t}}' = \phi^{\text{A}}(f_{i_{t}}) = \sigma^{\text{A}}(f_{i_{t}} W^{\text{A}} + b^{\text{A}}),
        \end{equation}
        \begin{equation}
             z_{i_{t}} = \phi^{\text{AC}}(f_{i_{t}}', c_{t}) = \sigma^{\text{AC}}(\text{cat}_{\text{col}}(f_{i_{t}}', c_{t})W^{ \text{AC}} + b^{\text{AC}}),
        \end{equation}
        \begin{equation}
             v_{i_{t}} = \phi^{\text{item}}(i_{t}, z_{i_{t}}) = \sigma^{\text{item}}(\text{cat}_{\text{col}}(\text{PIR}_{i_{t}}, z_{i_{t}})W ^{\text{item}} + b^{\text{item}}),
        \end{equation}
        where $W^{\text{A}} \in \mathbb{R}^{d_{\text{A}} \times d_{\text{A}}'}$, $b^{\text{A}} \in \mathbb{R}^{d_{\text{A}}'}$, $W^{\text{AC}} \in \mathbb{R}^{(d_{\text{A}}' + d_{\text{C}}) \times d_{\text{AC}}}$, $b^{\text{AC}} \in \mathbb{R}^{d_{\text{AC}}}$, $W^{\text{item}} \in \mathbb{R}^{(d_{\text{proxy}} + d_{\text{AC}}) \times d}$, $b^{\text{item}} \in \mathbb{R}^{d}$ are the weights and biases for the corresponding networks and $\sigma^{\text{A}}$, $\sigma^{\text{AC}}$, $\sigma^{\text{item}}$ are the activation functions.
        The overall architecture of our model is illustrated in Figure \ref{fig:model-architecture}.

        The above structure itself represents items solely based on their content information and behaves similarly to content-based filtering, which cannot reflect the collaborative signals between the items on its own.
        To give direct collaborative signals to frequent items, we introduce a \emph{frequent item bias}, a concept similar to the known user bias from \cite{cho2021unsupervised}.
        The frequent item bias is a structure where the selected top $K$ frequent items can memorize the biases for the proxy weights:
        \begin{equation}
            \hat{w}_{i_{t}} = \text{softmax}(w_{i_{t}} + b_{i_{t}}^{\text{freq}}),
        \end{equation}
        where $b_{i_{t}}^{\text{freq}} \in \mathbb{R}^{n_{\text{proxy}}}$ is a learnable bias vector.
        Existing models often attempt to reduce parameter usage by allocating different dimensions \cite{ginart2021mixed, zhao2020memory} or model structures \cite{wei2021contrastive, zhu2021learning} for frequent and infrequent items.
        In our proposed model, the only difference between the frequent and infrequent items is whether it can partly memorize the proxy weights or not in an identical model structure.
        Note that if $K = 0$, the proxy-based model can be interpreted as a deep learning version of content-based filtering, since it only uses the content information.
        On the other hand, if $K = |I|$, the model behaves similarly to low-rank factorization of the full-item look-up table $\text{IE}$, with a latent dimension of $n_{\text{proxy}}$, since the computation is similar to matrix factorization but with softmax non-linearity.
        Note that in this case, the model can still be applied to the case where there is no content information available, which reduces exactly into the low-rank factorization with softmax non-linearity.

        Our choice of using the softmax function generates $\text{PIR}_{i_{t}}$ to be a vector that resides inside a simplex, where vertices are proxy embeddings: $\hat{w}_{i_{t}} P = \sum_{r = 1}^{n_{\text{proxy}}} \hat{w}_{i_{t}, r} P_{r}$ where $\sum_{r = 1}^{n_{\text{proxy}}} \hat{w}_{i_{t}, r} = 1$, meeting the condition for simplex.
        Here, we can expect that the abundant interaction of frequent items will train the proxy embeddings to their promising state.
        Therefore, any $\text{PIR}_{i_{t}}$ can be expected to be a comprehensible vector to the network, where otherwise the infrequent item representations would have been learned poorly due to inadequate training.
        In addition, another property of the weighted sum mechanism is that the gradient of parameter update is formed as a unit of proxy embeddings: assuming that the proxy embeddings are fixed, $\frac{\partial \text{PIR}_{i_{t}}}{\partial \theta} = \sum_{r = 1}^{n_{\text{proxy}}} \frac{\partial \hat{w}_{i_{t}, r}}{\partial \theta} P_{r}$, for a parameter $\theta$ to be updated.
        This property allows infrequent items to borrow training signals from frequent items via proxy embeddings.
        We also prove the following propositions in the appendix: \emph{content locality}, where infrequent items with similar attributes are close to each other in the proxy-based representation space, and \emph{bias priority}, where the frequent item bias can be prioritized over the content information.

% --- figure start ---
\begin{figure}[tp]
    \centering
    \includegraphics[width=0.95\linewidth]{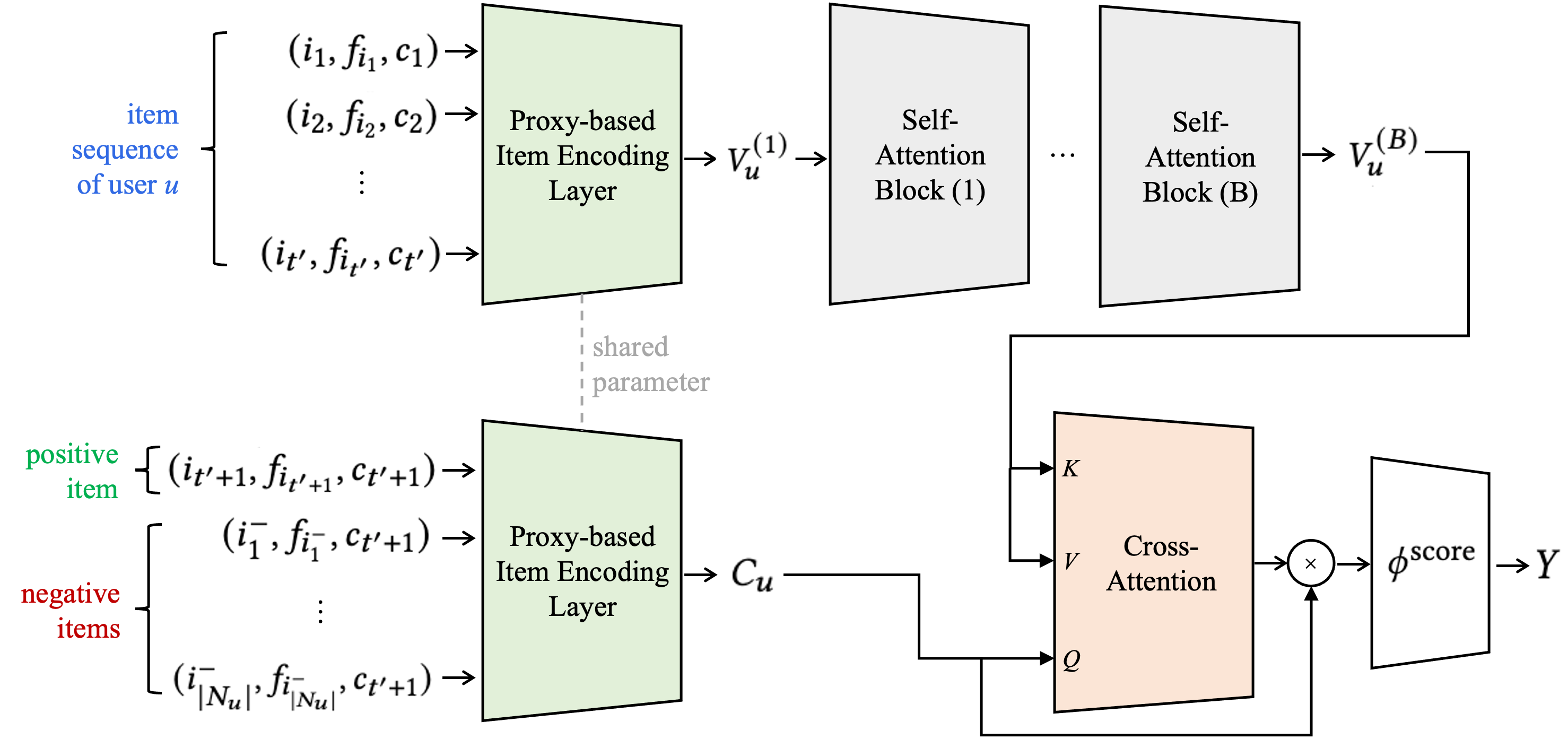}
    \caption{Overall model architecture.}
    \label{fig:model-architecture}
\end{figure}
% --- figure end ---

% --- table start ---
\begin{table*}[tp]
    \caption{Dataset statistics after preprocessing.}
    \label{tab:dataset}
    \begin{tabular}{c|rrrrrrr}
        \toprule
            Dataset & \#user & \#item & \#interaction & Density & Unique Attributes & Duplicates & Memorization \\
        \midrule
            Fashion &  45,184 & 166,270 &   358,003 & 0.0048\% & 163,985 &  1.0 & 98.09\% \\
            Men     &  34,244 & 110,636 &   254,870 & 0.0067\% & 109,282 &  1.0 & 98.36\% \\
            Beauty  &  52,204 &  57,289 &   394,908 & 0.0132\% &  11,094 &  5.2 & 41.92\% \\
            Game    &  31,013 &  23,715 &   287,107 & 0.0390\% &   1,701 & 13.9 &  7.26\% \\
            ML-20M  & 138,287 &  20,720 & 9,995,410 & 0.3488\% &   1,342 & 20.3 &  4.92\% \\
        \bottomrule
    \end{tabular}
\end{table*}
% --- table end ---

    \subsection{Complexity of Parameters}

        Since $\text{IE}$ uses $|I|$ embeddings while $P$ only uses $n_\text{proxy}$ embeddings, the proxy-based item representation can reduce a significant number of parameters.
        For computational convenience, assume that all latent dimensions are equal to, or proportional to $d$: $d_{\text{IE}} = d$, $d_{\text{AC}} \propto d$, $d_{\text{A}}' = d$, $d_{\varphi} \propto d$, and $d_{\text{proxy}} = d$.
        In a baseline model that uses $\text{IE}$, the number of parameters is $O(|I|d + M(d))$, where $M(d)$ is the number of parameters of the network other than the item embedding.
        For our model, the number of parameters is $O((K + d)n_{\text{proxy}} + M'(d))$, where $M'(d)$ denotes the number of parameters of the network other than the PIR layer.
        Note that in this case, the number of parameters does not increase proportionally to the number of items $|I|$.
        Even in the extreme case where $K = |I|$, parameters can be reduced whenever $n_{\text{proxy}}$ is just slightly smaller than $d$: let $n_{\text{proxy}} = \alpha d$, then we get $|I|d + M(d) > (|I| + d)n_{\text{proxy}} + M'(d) \Leftrightarrow (|I| + d)^{-1}\left(|I| + (M(d) - M'(d)) / d \right) > \alpha$.
        Assuming that $M(d) \approx M'(d)$, the model can shrink whenever $\alpha$ meets the condition (note that $d \ll |I|$ holds in general).

    \subsection{Training Objective}
    
        In sequential recommendation, it is common to create $|S_{u}^{\text{AC}}|-1$ tasks of Next Item Prediction (NIP) by using the aforementioned latent sequence vectors as user vectors, namely $V_{u, t}^{(B)}$ for each time step $1 \leq t \leq |S_{u}^{\text{AC}}|-1$.
        However, in the case of cross-attention, the latent sequence vectors possess bidirectional information of the item sequence.
        Therefore, training cross-attention with NIP task incurs a discrepancy between the training phase and the inference phase since NIP assumes a causality constraint in the training phase.
        To this end, we adopt Last Item Prediction (LIP) task, which uses only the very last item of each user as a positive item.
        In every training epoch, each item sequence $S_{u}^{\text{AC}}$ is randomly cut \cite{wu2020sse} into a subsequence so that diverse input data can be trained for the same user.
        We also apply NT-Xent \cite{rendle2009bpr, sohn2016improved} on the cross-attention, so that the model can employ multiple negatives.
        We call this method Contrastive CA, where it can only be implemented in LIP task since it is not feasible to sample multiple negative items having the same context in NIP task due to the enormous memory cost.
        The final training objective is as follows: Assuming that for every epoch, each item sequence $S_{u}^{\text{AC}}$ of user $u$ is randomly cut, and $N_{u}$ is randomly sampled except for testing,
        \begin{equation}
             \mathcal{L} = -|U|^{-1} \textstyle \sum_{u \in U} \log \left ( e^{Y_{1}} / \textstyle \sum_{j = 1}^{|I_{u }^{-}| + 1} e^{Y_{j}} \right).
        \end{equation}

% -------------------------------------------------------------------------
% -------------------------------------------------------------------------

\section{Experiment}

    To demonstrate the effectiveness of our proposed model, we have constructed experiments to answer the following research questions:
    \begin{itemize}
        \item \textbf{RQ1} Does PIR layer show superior performance in a plug-and-play manner?
        \item \textbf{RQ2} Does PIR layer enhance representation quality of infrequent items?
        \item \textbf{RQ3} Does ProxyRCA have a parameter-efficient structure?
        \item \textbf{RQ4} Do the proxy weights and proxy embeddings together compositionally compute item representations?
    \end{itemize}

    \subsection{Experimental Settings}

        \subsubsection{Datasets}
    
            To evaluate and analyze the proposed model, we use five widely used recommendation datasets from different domains, namely Fashion, Men, Game, and Beauty from Amazon Review datasets \cite{mcauley2015image}, and MovieLens-20M (ML-20M).
            Fashion and Men use the image vectors extracted from the pre-trained ResNet50 model as their item attributes.
            Except for the price feature of Game, the attributes for Beauty and Game are discrete and categorical information, namely tags.
            In ML-20M, only the genre of each movie is used as item attributes, which represents the case where the content information are extremely limited.
            For contexts, timestamp data is decomposed into multi-dimensional date information, following the preprocessing steps in \cite{rashed2022context}.
            The overall statistics of the datasets are given in Table \ref{tab:dataset}.
            In the preprocessing step, users with three or fewer interactions are excluded from the dataset to make a proper train-valid-test split.

% --- table start ---
\begin{table*}[tp]
    \caption{Performance comparison on all datasets. The best and the second best results are marked as bold and italic numbers respectively. The asterisk(*) denotes statistically significant (p < 0.05) gain against the non-PIR counterpart, using the $t$-test.}
    \label{tab:main}
    \begin{tabular}{cl|cccccccccc}
        \toprule
            \multirow{2}{*}{Task} & \multirow{2}{*}{Model} & \multicolumn{2}{c}{Fashion} & \multicolumn{2}{c}{Men} & \multicolumn{2}{c}{Beauty} & \multicolumn{2}{c}{Game} & \multicolumn{2}{c}{ML-20M} \\
            & & R@10 & N@10 & R@10 & N@10 & R@10 & N@10 & R@10 & N@10 & R@10 & N@10 \\
        \midrule
            & Popular    & 0.407 & 0.262 & 0.415 & 0.269 & 0.451 & 0.261 & 0.519 & 0.314 & 0.815 & 0.530 \\
        \midrule
            \multirow{2}{*}{\begin{tabular}{c}BPR\end{tabular}}
            & BPR++
                & 0.523 & 0.332 & 0.429 & 0.266 & 0.505 & 0.352 & 0.768 & 0.564 & 0.956 & 0.702 \\
            & $\, \hookrightarrow$ with \textbf{PIR}
                & 0.620* & 0.406* & 0.554* & 0.358* & 0.511 & 0.353 & 0.770 & {\ul 0.582}* & 0.957 & 0.761* \\
        \midrule
            \multirow{5}{*}{\begin{tabular}{c}LIP\end{tabular}}
            & MixDim++
                & 0.623 & 0.407 & 0.570 & 0.365 & 0.587 & 0.398 & 0.766 & 0.556 & {\ul 0.961} & 0.781 \\
            & SASRec++
                & 0.630 & 0.416 & 0.587 & 0.379 & 0.601 & 0.415 & 0.753 & 0.539 & 0.949 & 0.761 \\
            & $\, \hookrightarrow$ with \textbf{PIR}
                & 0.635 & 0.426* & 0.580 & 0.381 & 0.599 & 0.422 & {\ul 0.779}* & 0.572* & 0.944 & 0.773* \\
            & CARCA
                & {\ul 0.648} & {\ul 0.427} & {\ul 0.614} & {\ul 0.398} & {\ul 0.608} & {\ul 0.423} & 0.762 & 0.560 & 0.961 & {\ul 0.788} \\
            & $\, \hookrightarrow$ with \textbf{PIR} (\textbf{ProxyRCA})
                & \textbf{0.661}* & \textbf{0.446}* & \textbf{0.617} & \textbf{0.408} & \textbf{0.626}* & \textbf{0.449}* & \textbf{0.809}* & \textbf{0.611}* & \textbf{0.962} & \textbf{0.792}* \\
        \bottomrule
    \end{tabular}
\end{table*}
% --- table end ---

        \subsubsection{Evaluation Protocol}

            Overall, we follow the conventional evaluation protocol from the sequential recommendation studies \cite{rashed2022context, he2016vbpr, kang2018self}.
            We use the leave-one-out evaluation where the last item of each user is the test item, and the previous one to the last item is the valid item.
            A total of 100 negative items are randomly sampled among items that have not been interacted with the user.
            The performance is measured with Hit-Ratio (HR) and Normalized Discounted Cumulative Gain (NDCG), averaged over all users.
            We run 5 times each with different random seeds and report the average performance.

        \subsubsection{Comparison Models}

            For the fair comparison, we implemented the baseline models to have a similar structure and the same training objective (LIP task) to our setting.
            Full explanation and experimental results on NIP task baselines, namely BERT4Rec \cite{sun2019bert4rec}, SASRec \cite{kang2018self}, SSE-PT \cite{wu2020sse}, and S$^3$Rec \cite{zhou2020s3} are in the appendix.

            \begin{enumerate}
                \item Popular: A non-personalized recommendation where the preference score is based on the item's global popularity. It serves as a sanity check for the performance lower bound.
                \item SASRec++ \cite{rashed2022context}: An extension of SASRec \cite{kang2018self} that utilizes item attributes and context.
                \item CARCA \cite{rashed2022context}: A state-of-the-art ACSR model that uses cross-attention as an item scoring layer.
                \item BPR++ (ours): An extension of the non-sequential model BPR \cite{rendle2009bpr} that utilizes item attributes and context.
                \item MixDim++ (ours): An extension of CARCA with parameter-efficient mixed-dimension embedding \cite{ginart2021mixed} for encoding infrequent items. Note that this is an important baseline, which represents models that handle infrequent items (either long-tail or cold-start).
                \item ProxyRCA (ours): This is our main proposed model that employs proxy-based item representation as the item encoding layer on top of CARCA, named after \textbf{Proxy}-based item representation \textbf{R}ecommendation model with \textbf{C}ross-\textbf{A}ttention.
            \end{enumerate}

        \subsubsection{Implementation Details}

            To match the similar scale of parameters, we fixed the latent dimension $d$ to 256, and $n_{\text{proxy}}$ was set to 128.
            We initialized $b_{i_{t}}^{\text{freq}}$ to be a zero vector in order to prevent the unintended influence on the computation of proxy weights in the early stages of the training.
            Detailed hyper-parameter tuning procedures are described in the appendix.
            The PyTorch-based implementation is shared as an open-source repository \footnote{https://github.com/theeluwin/ProxyRCA}.

    \subsection{Overall Performance Comparison (RQ1)}

        Table \ref{tab:main} shows the overall performance comparison.
        We replaced the item encoding layer of three representative baselines, namely BPR++, SASRec++ and CARCA, and achieved performance improvement on almost all cases, as well as, all datasets in terms of NDCG.
        The final model, ProxyRCA reaches the state-of-the-art performance, improving up to 17\% when compared to the previously reported results.
        The improvement on BPR model demonstrates that our PIR layer is not bound to sequential models.
        Interesting point is that Fashion and Men were previously known to be sensitive to sequential information, but with PIR, BPR++ reaches comparable performance to sequential models.
        Note that in any case, since we are using $n_{\text{proxy}} = 0.5d$, only a minimum of 10\% and a maximum of 50\% of training parameters are used when compared to non-PIR counterparts.
        Moreover, our choice of the training objective, the LIP task, is generally superior to NIP task if tuned properly, which is shown in the appendix.
        
        To further analyze the above results, we counted the unique number of attributes and the number of duplicate items for each attribute, and implemented a 2-layered neural network that takes item attributes as input to perform a memorization task of classifying items based on their attributes.
        The result is summarized in Table \ref{tab:dataset}.
        In Fashion and Men, the item attributes distinguish the item accurately.
        In Game and ML-20M however, more than ten items share exactly the same attributes on average, and thus it is impractical to distinguish items by their attributes alone.
        Even with the datasets where attribute information is extremely weak, our ProxyRCA model can still outperform the baselines.
        We claim that the performance gain is due to the content-oriented shared embedding effect, similar to SSE \cite{wu2019stochastic}, explaining why the model outperforms even with limited attribute information.

% --- figure start ---
\begin{figure}[tp]
    \centering
    \includegraphics[width=0.5\linewidth]{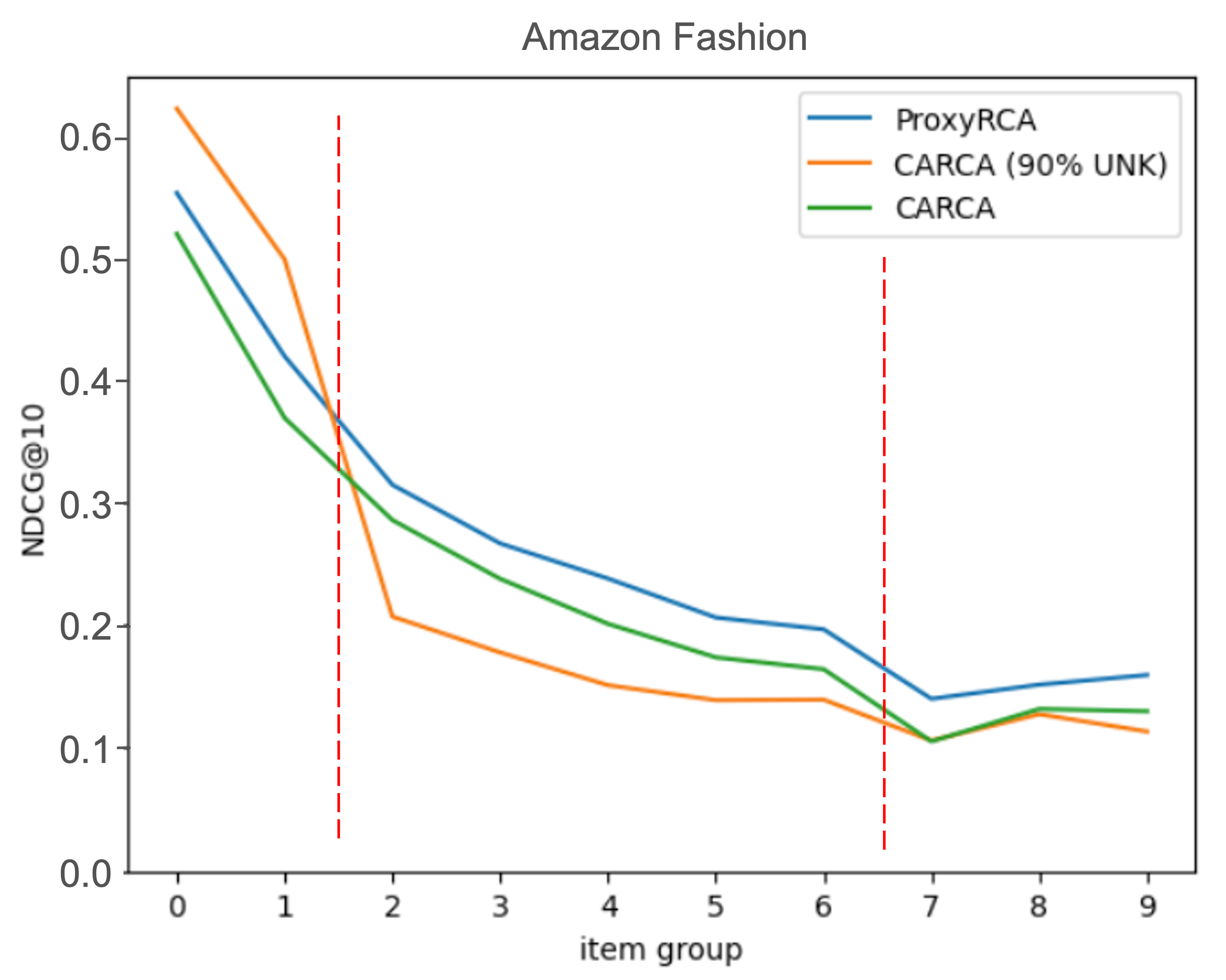}
    \caption{Performance comparison over frequency group.}
    \label{fig:ablation-group}
\end{figure}
% --- figure end ---

    \subsection {Performance on Infrequent Items (RQ2)}

        As mentioned in Table \ref{tab:unk}, the baseline model showed a performance increase when infrequent item embeddings are replaced into the shared unknown token in sacrifice of the recommendation diversity.
        This phenomenon is not observed in models with PIR, which indirectly proves that the representation quality of the infrequent items is improved.

        For further investigation, we divide items into 10 groups, sorted by frequency, so that the sum of item occurrence in each group is equal (\ie the frequent group will have much fewer items compared to the infrequent group).
        We measure the mean performance of each group, average over each item on the test set, since an item can be a test item multiple times.
        The result is shown in Figure \ref{fig:ablation-group}.
        For item groups 0 to 1, the baseline model with 90\% of removal surmounts others but gives up the recommendation for majority of relatively infrequent item groups, which is an intuitive result.
        On the other hand, looking at item groups 7 to 9, CARCA does not show a big difference compared to the case of removing infrequent items, but ProxyRCA shows performance improvement in these groups as well, which also shows that the representation quality is improved.

    \subsection{Parameter Efficiency (RQ3)}

% --- figure start ---
\begin{figure}[tp]
    \centering
    \setlength\abovecaptionskip{-0.0\baselineskip}
    \includegraphics[width=0.8\linewidth]{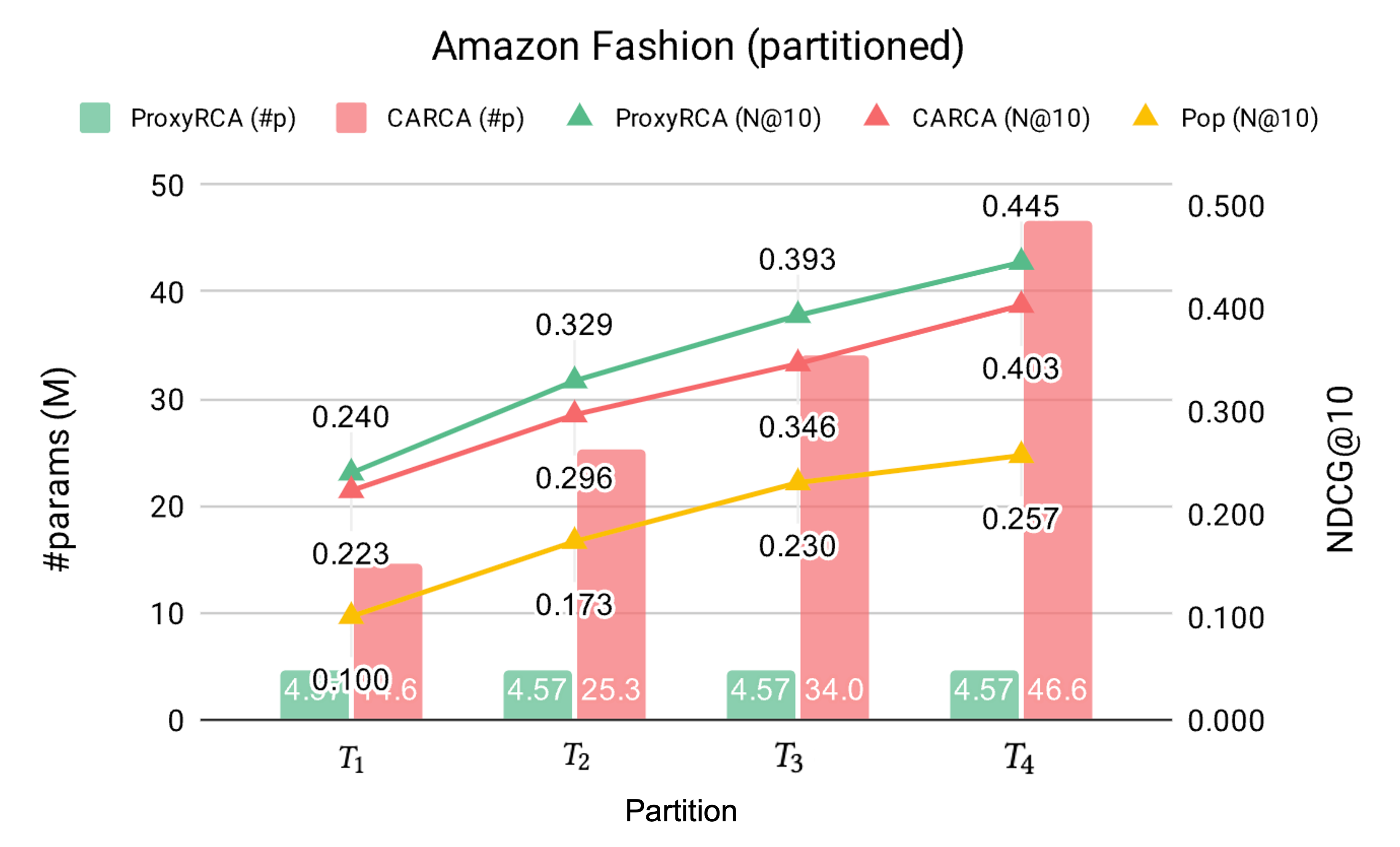}
    \caption{Parameter and performance comparison on the partitioned Fashion, simulating the growth of data.}
    \label{fig:param-fashion}
\end{figure}
% --- figure end ---

% --- figure start ---
\begin{figure}[tp]
    \centering
    \setlength\abovecaptionskip{-0.1\baselineskip}
    \includegraphics[width=0.9\linewidth]{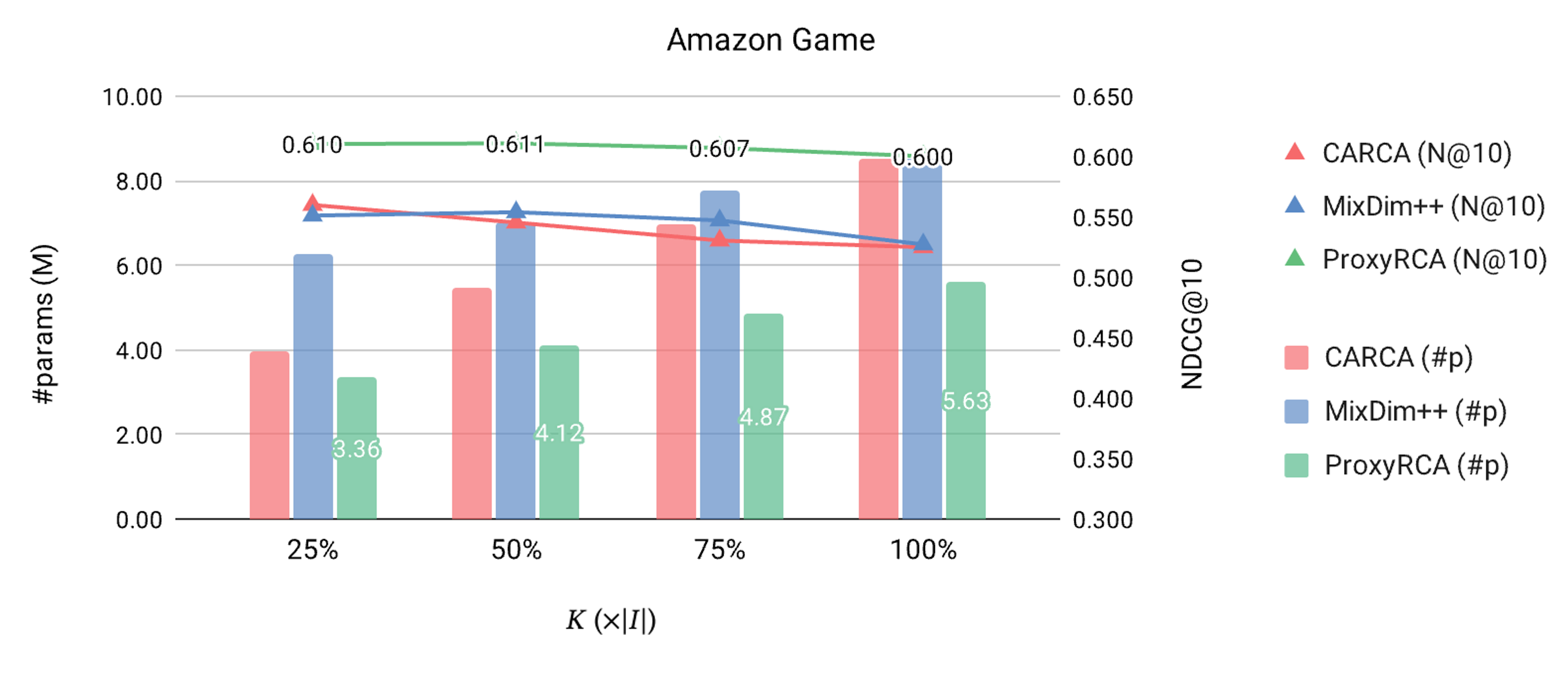}
    \caption{Parameter and performance comparison on Game over different $K$ value.}
    \label{fig:param-game}
\end{figure}
% --- figure end ---

        To demonstrate the parameter efficiency of ProxyRCA, we created four new datasets, namely Fashion $T_{r}$ ($r = 1, 2, 3, 4$) by truncating Fashion so that the number of items for each partition is equal to $r/4 \times |I|$.
        These datasets simulate the real-world recommendation scenario where new items are added to the system.
        As shown in Figure \ref{fig:param-fashion}, the baseline model continuously expands the look-up table to match the increasing number of items as $r$ increases.
        However, ProxyRCA succeeds in maintaining a superior performance with a fixed number of parameters.

        To analyze the effect of the number of frequent items, we compare CARCA, MixDim++, and ProxyRCA in Game.
        For a fair comparison, CARCA treats items beyond top $K$ frequent items as unknown items, assigning the shared unknown embedding.
        For MixDim++, $K$ corresponds to the ratio of items using full latent dimension, while smaller dimension is set to $n_{\text{proxy}}$.
        The experiment result is in Figure \ref{fig:param-game}.
        We can see that even with similar settings, ProxyRCA uses less parameters and slow parameter growth rate, with superior recommendation performance.

% --- figure start ---
\begin{figure}[tp]
    \centering
    \setlength\abovecaptionskip{0.3\baselineskip}
    \includegraphics[width=0.6\linewidth]{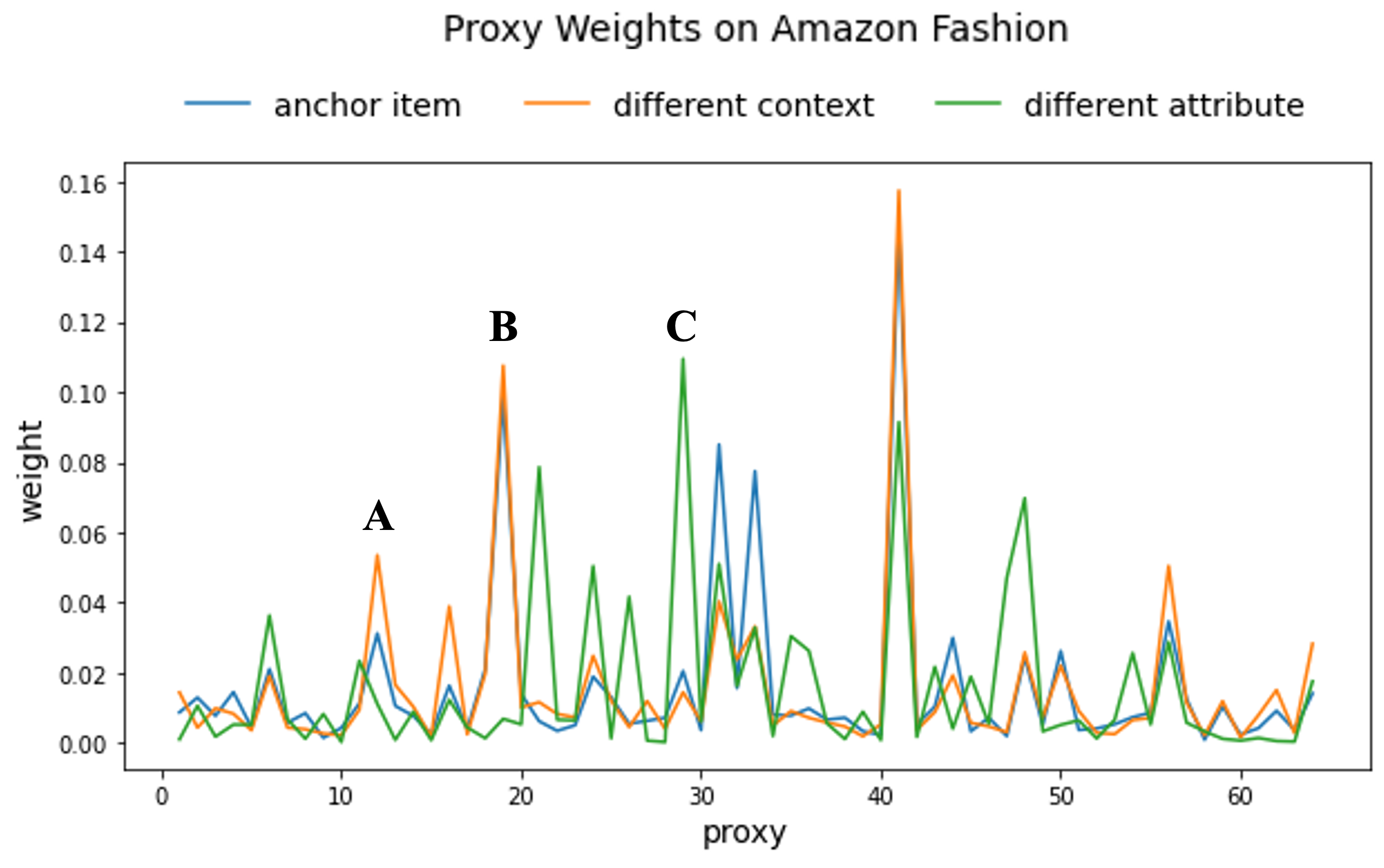}
    \caption{Visualization of proxy weights that reflect attribute and context modification.}
    \label{fig:viz-weight}
\end{figure}
% --- figure end ---

% --- figure start ---
\begin{figure}[tp]
    \centering
    \includegraphics[width=0.75\linewidth]{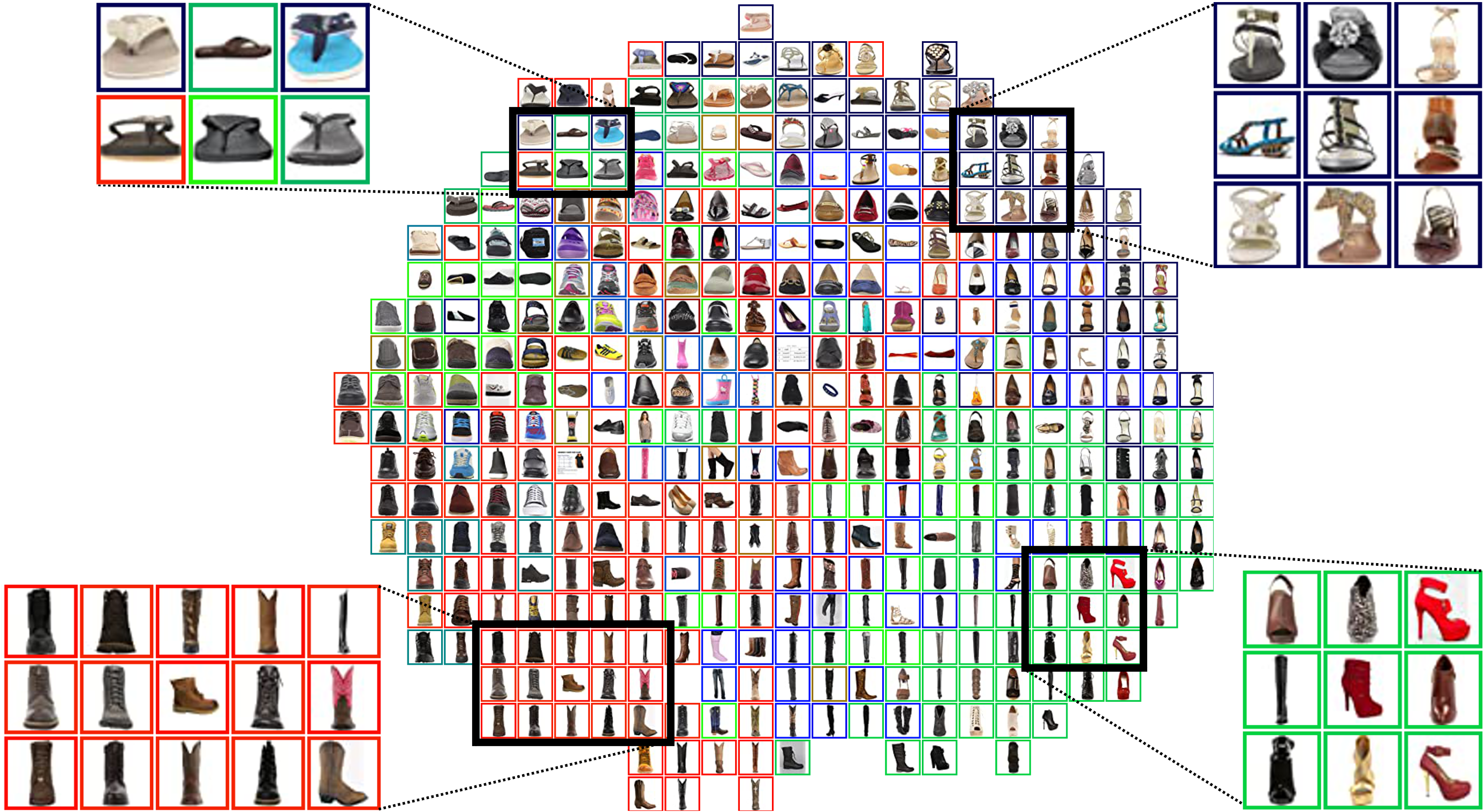}
    \caption{PCA visualization of item vectors on Fashion, using shoes category. The border colors denote the heaviest-weighted proxy.}
    \label{fig:viz-cluster}
\end{figure}
% --- figure end ---

    \subsection{Proxy Analysis (RQ4)}
        
        Since the proxy weights are mainly computed using item attribute and context, the weight values should be different when the attribute and context are changed without the frequent item bias.
        In Fashion, we first randomly choose an anchor item and visualize the computed proxy weights, as in Figure \ref{fig:viz-weight}.
        To see the impact of context in proxy weights, we change context to an another value and visualize the weights.
        Similarly, to see the impact of attributes in proxy weights, we sample another item with different attributes but same context, and visualize the weights.
        As shown in the figure, due to the nature of Fashion that uses item images as attributes, the attributes are strongly reflected in the proxy weights.
        Therefore, we can see that most weights are similar when only contexts differ (points \textbf{B} and \textbf{C} in the figure).
        Even so, different context still assign different proxy weights (point \textbf{A} as demonstrated in the figure).
        This visualization indicates that each proxy has its role to represent items compositionally.

        To show that the proxy-based item representation can also provide a clustering effect on infrequent items, we visualize the item vectors with $K = 0$ using PCA, as in Figure \ref{fig:viz-cluster}.
        We use items of the shoes category among others (\eg dress, outer) in Fashion to see if the model successfully captures the subcategory information without any labels, which is a rather challenging task when compared to supercategory clustering.
        The figure shows that the items of similar subcategories (e.g., flip-flops on top-left, sandals on top-right, boots on bottom-left, and pumps on bottom-right in the figure) are naturally clustered, even without any explicit labels.

% -------------------------------------------------------------------------
% -------------------------------------------------------------------------

\section{Conclusion}

    In this paper, we highlighted the issues with the full-item embedding table from the perspective of infrequent item training through experimental demonstration.
    To overcome the issues, we proposed a proxy-based item representation model that can replace the existing item encoding layer in a plug-and-play fashion, by computing the item representation as a weighted sum of learnable proxy embeddings.
    To prove the effectiveness of our model, experiments were conducted using recommendation benchmark datasets and achieved state-of-the-art performance.

% -------------------------------------------------------------------------
% -------------------------------------------------------------------------

\begin{acks}
    This work was made in collaboration with Seoul National University and IntelliSys Co., Ltd.
    Also, this work was partly supported by the National Research Foundation of Korea (NRF) [No. RS-2023-00243243] and Institute of Information \& communications Technology Planning \& Evaluation (IITP) [NO.2021-0-01343, Artificial Intelligence Graduate School Program (Seoul National University)] [NO.2021-0-02068, Artificial Intelligence Innovation Hub (Artificial Intelligence Institute, Seoul National University)], both grant funded by the Korea government (MSIT).
\end{acks}

% -------------------------------------------------------------------------
% -------------------------------------------------------------------------

%# bib
\bibliographystyle{reference}
\bibliography{reference}

%%% -*-BibTeX-*-
%%% Do NOT edit. File created by BibTeX with style
%%% ACM-Reference-Format-Journals [18-Jan-2012].

\begin{thebibliography}{57}

%%% ====================================================================
%%% NOTE TO THE USER: you can override these defaults by providing
%%% customized versions of any of these macros before the \bibliography
%%% command.  Each of them MUST provide its own final punctuation,
%%% except for \shownote{}, \showDOI{}, and \showURL{}.  The latter two
%%% do not use final punctuation, in order to avoid confusing it with
%%% the Web address.
%%%
%%% To suppress output of a particular field, define its macro to expand
%%% to an empty string, or better, \unskip, like this:
%%%
%%% \newcommand{\showDOI}[1]{\unskip}   % LaTeX syntax
%%%
%%% \def \showDOI #1{\unskip}           % plain TeX syntax
%%%
%%% ====================================================================

\ifx \showCODEN    \undefined \def \showCODEN     #1{\unskip}     \fi
\ifx \showDOI      \undefined \def \showDOI       #1{#1}\fi
\ifx \showISBNx    \undefined \def \showISBNx     #1{\unskip}     \fi
\ifx \showISBNxiii \undefined \def \showISBNxiii  #1{\unskip}     \fi
\ifx \showISSN     \undefined \def \showISSN      #1{\unskip}     \fi
\ifx \showLCCN     \undefined \def \showLCCN      #1{\unskip}     \fi
\ifx \shownote     \undefined \def \shownote      #1{#1}          \fi
\ifx \showarticletitle \undefined \def \showarticletitle #1{#1}   \fi
\ifx \showURL      \undefined \def \showURL       {\relax}        \fi
% The following commands are used for tagged output and should be
% invisible to TeX
\providecommand\bibfield[2]{#2}
\providecommand\bibinfo[2]{#2}
\providecommand\natexlab[1]{#1}
\providecommand\showeprint[2][]{arXiv:#2}

\bibitem[Ardalani et~al\mbox{.}(2022)]%
        {ardalani2022understanding}
\bibfield{author}{\bibinfo{person}{Newsha Ardalani}, \bibinfo{person}{Carole-Jean Wu}, \bibinfo{person}{Zeliang Chen}, \bibinfo{person}{Bhargav Bhushanam}, {and} \bibinfo{person}{Adnan Aziz}.} \bibinfo{year}{2022}\natexlab{}.
\newblock \showarticletitle{Understanding Scaling Laws for Recommendation Models}.
\newblock \bibinfo{journal}{\emph{arXiv preprint arXiv:2208.08489}} (\bibinfo{year}{2022}).
\newblock


\bibitem[Ba et~al\mbox{.}(2016)]%
        {ba2016layer}
\bibfield{author}{\bibinfo{person}{Jimmy~Lei Ba}, \bibinfo{person}{Jamie~Ryan Kiros}, {and} \bibinfo{person}{Geoffrey~E Hinton}.} \bibinfo{year}{2016}\natexlab{}.
\newblock \showarticletitle{Layer Normalization}.
\newblock \bibinfo{journal}{\emph{arXiv preprint arXiv:1607.06450}} (\bibinfo{year}{2016}).
\newblock


\bibitem[Batmaz et~al\mbox{.}(2019)]%
        {batmaz2019review}
\bibfield{author}{\bibinfo{person}{Zeynep Batmaz}, \bibinfo{person}{Ali Yurekli}, \bibinfo{person}{Alper Bilge}, {and} \bibinfo{person}{Cihan Kaleli}.} \bibinfo{year}{2019}\natexlab{}.
\newblock \showarticletitle{A Review on Deep Learning for Recommender Systems: Challenges and Remedies}.
\newblock \bibinfo{journal}{\emph{Artificial Intelligence Review}}  \bibinfo{volume}{52} (\bibinfo{year}{2019}), \bibinfo{pages}{1--37}.
\newblock


\bibitem[Caron et~al\mbox{.}(2020)]%
        {caron2020unsupervised}
\bibfield{author}{\bibinfo{person}{Mathilde Caron}, \bibinfo{person}{Ishan Misra}, \bibinfo{person}{Julien Mairal}, \bibinfo{person}{Priya Goyal}, \bibinfo{person}{Piotr Bojanowski}, {and} \bibinfo{person}{Armand Joulin}.} \bibinfo{year}{2020}\natexlab{}.
\newblock \showarticletitle{Unsupervised Learning of Visual Features by Contrasting Cluster Assignments}.
\newblock \bibinfo{journal}{\emph{Advances in Neural Information Processing Systems}}  \bibinfo{volume}{33} (\bibinfo{year}{2020}), \bibinfo{pages}{9912--9924}.
\newblock


\bibitem[Caron et~al\mbox{.}(2021)]%
        {caron2021emerging}
\bibfield{author}{\bibinfo{person}{Mathilde Caron}, \bibinfo{person}{Hugo Touvron}, \bibinfo{person}{Ishan Misra}, \bibinfo{person}{Herv{\'e} J{\'e}gou}, \bibinfo{person}{Julien Mairal}, \bibinfo{person}{Piotr Bojanowski}, {and} \bibinfo{person}{Armand Joulin}.} \bibinfo{year}{2021}\natexlab{}.
\newblock \showarticletitle{Emerging Properties in Self-supervised Vision Transformers}. In \bibinfo{booktitle}{\emph{Proceedings of the IEEE/CVF International Conference on Computer Vision}}. \bibinfo{pages}{9650--9660}.
\newblock


\bibitem[Chen et~al\mbox{.}(2022b)]%
        {chen2022generative}
\bibfield{author}{\bibinfo{person}{Hao Chen}, \bibinfo{person}{Zefan Wang}, \bibinfo{person}{Feiran Huang}, \bibinfo{person}{Xiao Huang}, \bibinfo{person}{Yue Xu}, \bibinfo{person}{Yishi Lin}, \bibinfo{person}{Peng He}, {and} \bibinfo{person}{Zhoujun Li}.} \bibinfo{year}{2022}\natexlab{b}.
\newblock \showarticletitle{Generative Adversarial Framework for Cold-start Item Recommendation}. In \bibinfo{booktitle}{\emph{Proceedings of the 45th International ACM SIGIR Conference on Research and Development in Information Retrieval}}. \bibinfo{pages}{2565--2571}.
\newblock


\bibitem[Chen et~al\mbox{.}(2020)]%
        {chen2020simple}
\bibfield{author}{\bibinfo{person}{Ting Chen}, \bibinfo{person}{Simon Kornblith}, \bibinfo{person}{Mohammad Norouzi}, {and} \bibinfo{person}{Geoffrey Hinton}.} \bibinfo{year}{2020}\natexlab{}.
\newblock \showarticletitle{A Simple Framework for Contrastive Learning of Visual Representations}. In \bibinfo{booktitle}{\emph{International Conference on Machine Learning}}. \bibinfo{pages}{1597--1607}.
\newblock


\bibitem[Chen et~al\mbox{.}(2023)]%
        {chen2023clustered}
\bibfield{author}{\bibinfo{person}{Yizhou Chen}, \bibinfo{person}{Guangda Huzhang}, \bibinfo{person}{Anxiang Zeng}, \bibinfo{person}{Qingtao Yu}, \bibinfo{person}{Hui Sun}, \bibinfo{person}{Hengyi Li}, \bibinfo{person}{Jingyi Li}, \bibinfo{person}{Yabo Ni}, \bibinfo{person}{Han Yu}, {and} \bibinfo{person}{Zhiming Zhou}.} \bibinfo{year}{2023}\natexlab{}.
\newblock \showarticletitle{Clustered Embedding Learning for Recommender Systems}. In \bibinfo{booktitle}{\emph{The World Wide Web Conference}}.
\newblock


\bibitem[Chen et~al\mbox{.}(2022a)]%
        {chen2022intent}
\bibfield{author}{\bibinfo{person}{Yongjun Chen}, \bibinfo{person}{Zhiwei Liu}, \bibinfo{person}{Jia Li}, \bibinfo{person}{Julian McAuley}, {and} \bibinfo{person}{Caiming Xiong}.} \bibinfo{year}{2022}\natexlab{a}.
\newblock \showarticletitle{Intent Contrastive Learning for Sequential Recommendation}.
\newblock \bibinfo{journal}{\emph{arXiv preprint arXiv:2202.02519}} (\bibinfo{year}{2022}).
\newblock


\bibitem[Cho et~al\mbox{.}(2021b)]%
        {cho2021masked}
\bibfield{author}{\bibinfo{person}{Hyunsoo Cho}, \bibinfo{person}{Jinseok Seol}, {and} \bibinfo{person}{Sang-goo Lee}.} \bibinfo{year}{2021}\natexlab{b}.
\newblock \showarticletitle{Masked Contrastive Learning for Anomaly Detection}.
\newblock \bibinfo{journal}{\emph{The 30th International Joint Conference on Artificial Intelligence}} (\bibinfo{year}{2021}).
\newblock


\bibitem[Cho et~al\mbox{.}(2021a)]%
        {cho2021unsupervised}
\bibfield{author}{\bibinfo{person}{Junsu Cho}, \bibinfo{person}{SeongKu Kang}, \bibinfo{person}{Dongmin Hyun}, {and} \bibinfo{person}{Hwanjo Yu}.} \bibinfo{year}{2021}\natexlab{a}.
\newblock \showarticletitle{Unsupervised Proxy Selection for Session-based Recommender Systems}. In \bibinfo{booktitle}{\emph{Proceedings of the 44th International ACM SIGIR Conference on Research and Development in Information Retrieval}}. \bibinfo{pages}{327--336}.
\newblock


\bibitem[Covington et~al\mbox{.}(2016)]%
        {covington2016deep}
\bibfield{author}{\bibinfo{person}{Paul Covington}, \bibinfo{person}{Jay Adams}, {and} \bibinfo{person}{Emre Sargin}.} \bibinfo{year}{2016}\natexlab{}.
\newblock \showarticletitle{Deep Neural Networks for Youtube Recommendations}. In \bibinfo{booktitle}{\emph{Proceedings of the 10th ACM Conference on Recommender Systems}}. \bibinfo{pages}{191--198}.
\newblock


\bibitem[Dong et~al\mbox{.}(2020)]%
        {dong2020mamo}
\bibfield{author}{\bibinfo{person}{Manqing Dong}, \bibinfo{person}{Feng Yuan}, \bibinfo{person}{Lina Yao}, \bibinfo{person}{Xiwei Xu}, {and} \bibinfo{person}{Liming Zhu}.} \bibinfo{year}{2020}\natexlab{}.
\newblock \showarticletitle{Mamo: Memory-augmented Meta-optimization for Cold-start Recommendation}. In \bibinfo{booktitle}{\emph{Proceedings of the 26th ACM SIGKDD International Conference on Knowledge Discovery \& Data Mining}}. \bibinfo{pages}{688--697}.
\newblock


\bibitem[Fan et~al\mbox{.}(2021)]%
        {fan2021continuous}
\bibfield{author}{\bibinfo{person}{Ziwei Fan}, \bibinfo{person}{Zhiwei Liu}, \bibinfo{person}{Jiawei Zhang}, \bibinfo{person}{Yun Xiong}, \bibinfo{person}{Lei Zheng}, {and} \bibinfo{person}{Philip~S Yu}.} \bibinfo{year}{2021}\natexlab{}.
\newblock \showarticletitle{Continuous-time Sequential Recommendation with Temporal Graph Collaborative Transformer}. In \bibinfo{booktitle}{\emph{Proceedings of the 30th ACM International Conference on Information \& Knowledge Management}}. \bibinfo{pages}{433--442}.
\newblock


\bibitem[Guo et~al\mbox{.}(2017)]%
        {guo2017deepfm}
\bibfield{author}{\bibinfo{person}{Huifeng Guo}, \bibinfo{person}{Ruiming Tang}, \bibinfo{person}{Yunming Ye}, \bibinfo{person}{Zhenguo Li}, {and} \bibinfo{person}{Xiuqiang He}.} \bibinfo{year}{2017}\natexlab{}.
\newblock \showarticletitle{DeepFM: A Factorization-machine based Neural Network for CTR Prediction}.
\newblock \bibinfo{journal}{\emph{Proceedings of the 26th International Joint Conference on Artificial Intelligence}} (\bibinfo{year}{2017}).
\newblock


\bibitem[He and McAuley(2016)]%
        {he2016vbpr}
\bibfield{author}{\bibinfo{person}{Ruining He} {and} \bibinfo{person}{Julian McAuley}.} \bibinfo{year}{2016}\natexlab{}.
\newblock \showarticletitle{VBPR: Visual Bayesian Personalized Ranking from Implicit Feedback}. In \bibinfo{booktitle}{\emph{Proceedings of the AAAI conference on Artificial Intelligence}}, Vol.~\bibinfo{volume}{30}.
\newblock


\bibitem[He et~al\mbox{.}(2017)]%
        {he2017neural}
\bibfield{author}{\bibinfo{person}{Xiangnan He}, \bibinfo{person}{Lizi Liao}, \bibinfo{person}{Hanwang Zhang}, \bibinfo{person}{Liqiang Nie}, \bibinfo{person}{Xia Hu}, {and} \bibinfo{person}{Tat-Seng Chua}.} \bibinfo{year}{2017}\natexlab{}.
\newblock \showarticletitle{Neural Collaborative Filtering}. In \bibinfo{booktitle}{\emph{Proceedings of the 26th International Conference on World Wide Web}}. \bibinfo{pages}{173--182}.
\newblock


\bibitem[Kang et~al\mbox{.}(2021)]%
        {kang2021learning}
\bibfield{author}{\bibinfo{person}{Wang-Cheng Kang}, \bibinfo{person}{Derek~Zhiyuan Cheng}, \bibinfo{person}{Tiansheng Yao}, \bibinfo{person}{Xinyang Yi}, \bibinfo{person}{Ting Chen}, \bibinfo{person}{Lichan Hong}, {and} \bibinfo{person}{Ed~H Chi}.} \bibinfo{year}{2021}\natexlab{}.
\newblock \showarticletitle{Learning to Embed Categorical Features without Embedding Tables for Recommendation}. In \bibinfo{booktitle}{\emph{Proceedings of the 27th ACM SIGKDD Conference on Knowledge Discovery \& Data Mining}}. \bibinfo{pages}{840--850}.
\newblock


\bibitem[Kang and McAuley(2018)]%
        {kang2018self}
\bibfield{author}{\bibinfo{person}{Wang-Cheng Kang} {and} \bibinfo{person}{Julian McAuley}.} \bibinfo{year}{2018}\natexlab{}.
\newblock \showarticletitle{Self-attentive Sequential Recommendation}. In \bibinfo{booktitle}{\emph{2018 IEEE International Conference on Data Mining}}. \bibinfo{pages}{197--206}.
\newblock


\bibitem[Kim et~al\mbox{.}(2020)]%
        {kim2020proxy}
\bibfield{author}{\bibinfo{person}{Sungyeon Kim}, \bibinfo{person}{Dongwon Kim}, \bibinfo{person}{Minsu Cho}, {and} \bibinfo{person}{Suha Kwak}.} \bibinfo{year}{2020}\natexlab{}.
\newblock \showarticletitle{Proxy Anchor Koss for Deep Metric Learning}. In \bibinfo{booktitle}{\emph{Proceedings of the IEEE/CVF Conference on Computer Vision and Pattern Recognition}}. \bibinfo{pages}{3238--3247}.
\newblock


\bibitem[Koren et~al\mbox{.}(2009)]%
        {koren2009matrix}
\bibfield{author}{\bibinfo{person}{Yehuda Koren}, \bibinfo{person}{Robert Bell}, {and} \bibinfo{person}{Chris Volinsky}.} \bibinfo{year}{2009}\natexlab{}.
\newblock \showarticletitle{Matrix Factorization Techniques for Recommender Systems}.
\newblock \bibinfo{journal}{\emph{Computer}} \bibinfo{volume}{42}, \bibinfo{number}{8} (\bibinfo{year}{2009}), \bibinfo{pages}{30--37}.
\newblock


\bibitem[Li et~al\mbox{.}(2021b)]%
        {li2020prototypical}
\bibfield{author}{\bibinfo{person}{Junnan Li}, \bibinfo{person}{Pan Zhou}, \bibinfo{person}{Caiming Xiong}, {and} \bibinfo{person}{Steven~CH Hoi}.} \bibinfo{year}{2021}\natexlab{b}.
\newblock \showarticletitle{Prototypical Contrastive Learning of Unsupervised Representations}.
\newblock \bibinfo{journal}{\emph{International Conference on Learning Representations}} (\bibinfo{year}{2021}).
\newblock


\bibitem[Li et~al\mbox{.}(2021a)]%
        {li2021lightweight}
\bibfield{author}{\bibinfo{person}{Yang Li}, \bibinfo{person}{Tong Chen}, \bibinfo{person}{Peng-Fei Zhang}, {and} \bibinfo{person}{Hongzhi Yin}.} \bibinfo{year}{2021}\natexlab{a}.
\newblock \showarticletitle{Lightweight Self-attentive Sequential Recommendation}. In \bibinfo{booktitle}{\emph{Proceedings of the 30th ACM International Conference on Information \& Knowledge Management}}. \bibinfo{pages}{967--977}.
\newblock


\bibitem[Liu et~al\mbox{.}(2021)]%
        {liu2021learnable}
\bibfield{author}{\bibinfo{person}{Siyi Liu}, \bibinfo{person}{Chen Gao}, \bibinfo{person}{Yihong Chen}, \bibinfo{person}{Depeng Jin}, {and} \bibinfo{person}{Yong Li}.} \bibinfo{year}{2021}\natexlab{}.
\newblock \showarticletitle{Learnable embedding sizes for recommender systems}.
\newblock \bibinfo{journal}{\emph{2021 International Conference on Learning Representations}} (\bibinfo{year}{2021}).
\newblock


\bibitem[Loshchilov and Hutter(2019)]%
        {loshchilov2018fixing}
\bibfield{author}{\bibinfo{person}{Ilya Loshchilov} {and} \bibinfo{person}{Frank Hutter}.} \bibinfo{year}{2019}\natexlab{}.
\newblock \showarticletitle{Decoupled Weight Decay Regularization}.
\newblock \bibinfo{journal}{\emph{2019 International Conference on Learning Representations}} (\bibinfo{year}{2019}).
\newblock


\bibitem[McAuley et~al\mbox{.}(2015)]%
        {mcauley2015image}
\bibfield{author}{\bibinfo{person}{Julian McAuley}, \bibinfo{person}{Christopher Targett}, \bibinfo{person}{Qinfeng Shi}, {and} \bibinfo{person}{Anton Van Den~Hengel}.} \bibinfo{year}{2015}\natexlab{}.
\newblock \showarticletitle{Image-based Recommendations on Styles and Substitutes}. In \bibinfo{booktitle}{\emph{Proceedings of the 38th International ACM SIGIR Conference on Research and Development in Information Retrieval}}. \bibinfo{pages}{43--52}.
\newblock


\bibitem[Mikolov et~al\mbox{.}(2013)]%
        {mikolov2013distributed}
\bibfield{author}{\bibinfo{person}{Tomas Mikolov}, \bibinfo{person}{Ilya Sutskever}, \bibinfo{person}{Kai Chen}, \bibinfo{person}{Greg~S Corrado}, {and} \bibinfo{person}{Jeff Dean}.} \bibinfo{year}{2013}\natexlab{}.
\newblock \showarticletitle{Distributed Representations of Words and Phrases and their Compositionality}.
\newblock \bibinfo{journal}{\emph{Advances in Neural Information Processing Systems}}  \bibinfo{volume}{26} (\bibinfo{year}{2013}).
\newblock


\bibitem[Movshovitz-Attias et~al\mbox{.}(2017)]%
        {movshovitz2017no}
\bibfield{author}{\bibinfo{person}{Yair Movshovitz-Attias}, \bibinfo{person}{Alexander Toshev}, \bibinfo{person}{Thomas~K Leung}, \bibinfo{person}{Sergey Ioffe}, {and} \bibinfo{person}{Saurabh Singh}.} \bibinfo{year}{2017}\natexlab{}.
\newblock \showarticletitle{No Fuss Distance Metric Learning using Proxies}. In \bibinfo{booktitle}{\emph{Proceedings of the IEEE International Conference on Computer Vision}}. \bibinfo{pages}{360--368}.
\newblock


\bibitem[Naumov et~al\mbox{.}(2021)]%
        {ginart2021mixed}
\bibfield{author}{\bibinfo{person}{Maxim Naumov}, \bibinfo{person}{Dheevatsa Mudigere}, \bibinfo{person}{Jiyan Yang}, {and} \bibinfo{person}{James Zou}.} \bibinfo{year}{2021}\natexlab{}.
\newblock \showarticletitle{Mixed Dimension Embeddings with Application to Memory-efficient Recommendation Systems}. In \bibinfo{booktitle}{\emph{IEEE International Symposium on Information Theory}}. \bibinfo{pages}{2786--2791}.
\newblock


\bibitem[Ning and Karypis(2011)]%
        {ning2011slim}
\bibfield{author}{\bibinfo{person}{Xia Ning} {and} \bibinfo{person}{George Karypis}.} \bibinfo{year}{2011}\natexlab{}.
\newblock \showarticletitle{SLIM: Sparse Linear Methods for Top-n Recommender Systems}. In \bibinfo{booktitle}{\emph{IEEE 11th International Conference on Data Mining}}. \bibinfo{pages}{497--506}.
\newblock


\bibitem[Petrov and Macdonald(2022a)]%
        {petrov2022effective}
\bibfield{author}{\bibinfo{person}{Aleksandr Petrov} {and} \bibinfo{person}{Craig Macdonald}.} \bibinfo{year}{2022}\natexlab{a}.
\newblock \showarticletitle{Effective and Efficient Training for Sequential Recommendation using Recency Sampling}. In \bibinfo{booktitle}{\emph{Proceedings of the 16th ACM Conference on Recommender Systems}}. \bibinfo{pages}{81--91}.
\newblock


\bibitem[Petrov and Macdonald(2022b)]%
        {petrov2022systematic}
\bibfield{author}{\bibinfo{person}{Aleksandr Petrov} {and} \bibinfo{person}{Craig Macdonald}.} \bibinfo{year}{2022}\natexlab{b}.
\newblock \showarticletitle{A Systematic Review and Replicability Study of BERT4Rec for Sequential Recommendation}. In \bibinfo{booktitle}{\emph{Proceedings of the 16th ACM Conference on Recommender Systems}}. \bibinfo{pages}{436--447}.
\newblock


\bibitem[Pfadler et~al\mbox{.}(2020)]%
        {pfadler2020billion}
\bibfield{author}{\bibinfo{person}{Andreas Pfadler}, \bibinfo{person}{Huan Zhao}, \bibinfo{person}{Jizhe Wang}, \bibinfo{person}{Lifeng Wang}, \bibinfo{person}{Pipei Huang}, {and} \bibinfo{person}{Dik~Lun Lee}.} \bibinfo{year}{2020}\natexlab{}.
\newblock \showarticletitle{Billion-scale recommendation with Heterogeneous Side Information at Taobao}. In \bibinfo{booktitle}{\emph{IEEE 36th International Conference on Data Engineering}}. \bibinfo{pages}{1667--1676}.
\newblock


\bibitem[Quadrana et~al\mbox{.}(2017)]%
        {quadrana2017personalizing}
\bibfield{author}{\bibinfo{person}{Massimo Quadrana}, \bibinfo{person}{Alexandros Karatzoglou}, \bibinfo{person}{Bal{\'a}zs Hidasi}, {and} \bibinfo{person}{Paolo Cremonesi}.} \bibinfo{year}{2017}\natexlab{}.
\newblock \showarticletitle{Personalizing Session-based Recommendations with Hierarchical Recurrent Neural Networks}. In \bibinfo{booktitle}{\emph{Proceedings of the 11th ACM Conference on Recommender Systems}}. \bibinfo{pages}{130--137}.
\newblock


\bibitem[Rashed et~al\mbox{.}(2022)]%
        {rashed2022context}
\bibfield{author}{\bibinfo{person}{Ahmed Rashed}, \bibinfo{person}{Shereen Elsayed}, {and} \bibinfo{person}{Lars Schmidt-Thieme}.} \bibinfo{year}{2022}\natexlab{}.
\newblock \showarticletitle{Context and Attribute-Aware Sequential Recommendation via Cross-Attention}. In \bibinfo{booktitle}{\emph{Proceedings of the 16th ACM Conference on Recommender Systems}}. \bibinfo{pages}{71--80}.
\newblock


\bibitem[Rendle(2010)]%
        {rendle2010factorization}
\bibfield{author}{\bibinfo{person}{Steffen Rendle}.} \bibinfo{year}{2010}\natexlab{}.
\newblock \showarticletitle{Factorization Machines}. In \bibinfo{booktitle}{\emph{IEEE International Conference on Data Mining}}. \bibinfo{pages}{995--1000}.
\newblock


\bibitem[Rendle et~al\mbox{.}(2009)]%
        {rendle2009bpr}
\bibfield{author}{\bibinfo{person}{Steffen Rendle}, \bibinfo{person}{Christoph Freudenthaler}, \bibinfo{person}{Zeno Gantner}, {and} \bibinfo{person}{Lars Schmidt-Thieme}.} \bibinfo{year}{2009}\natexlab{}.
\newblock \showarticletitle{BPR: Bayesian Personalized Ranking from Implicit Feedback}. In \bibinfo{booktitle}{\emph{Proceedings of the 25th Conference on Uncertainty in Artificial Intelligence}}. \bibinfo{pages}{452--461}.
\newblock


\bibitem[Sarwar et~al\mbox{.}(2001)]%
        {sarwar2001item}
\bibfield{author}{\bibinfo{person}{Badrul Sarwar}, \bibinfo{person}{George Karypis}, \bibinfo{person}{Joseph Konstan}, {and} \bibinfo{person}{John Riedl}.} \bibinfo{year}{2001}\natexlab{}.
\newblock \showarticletitle{Item-based Collaborative Filtering Recommendation Algorithms}. In \bibinfo{booktitle}{\emph{Proceedings of the 10th International Conference on World Wide Web}}. \bibinfo{pages}{285--295}.
\newblock


\bibitem[Seol et~al\mbox{.}(2022)]%
        {seol2022exploiting}
\bibfield{author}{\bibinfo{person}{Jinseok Seol}, \bibinfo{person}{Youngrok Ko}, {and} \bibinfo{person}{Sang-goo Lee}.} \bibinfo{year}{2022}\natexlab{}.
\newblock \showarticletitle{Exploiting Session Information in BERT-based Session-aware Sequential Recommendation}. In \bibinfo{booktitle}{\emph{Proceedings of the 45th International ACM SIGIR Conference on Research and Development in Information Retrieval}}. \bibinfo{pages}{2639--2644}.
\newblock


\bibitem[Shen et~al\mbox{.}(2021)]%
        {shen2021umec}
\bibfield{author}{\bibinfo{person}{Jiayi Shen}, \bibinfo{person}{Haotao Wang}, \bibinfo{person}{Shupeng Gui}, \bibinfo{person}{Jianchao Tan}, \bibinfo{person}{Zhangyang Wang}, {and} \bibinfo{person}{Ji Liu}.} \bibinfo{year}{2021}\natexlab{}.
\newblock \showarticletitle{UMEC: Unified Model and Embedding Compression for Efficient Recommendation Systems}.
\newblock \bibinfo{journal}{\emph{International Conference on Learning Representations}}.
\newblock


\bibitem[Sohn(2016)]%
        {sohn2016improved}
\bibfield{author}{\bibinfo{person}{Kihyuk Sohn}.} \bibinfo{year}{2016}\natexlab{}.
\newblock \showarticletitle{Improved Deep Metric Learning with Multi-class n-pair Loss Objective}.
\newblock \bibinfo{journal}{\emph{Advances in Neural Information Processing Systems}}  \bibinfo{volume}{29} (\bibinfo{year}{2016}).
\newblock


\bibitem[Sun et~al\mbox{.}(2019)]%
        {sun2019bert4rec}
\bibfield{author}{\bibinfo{person}{Fei Sun}, \bibinfo{person}{Jun Liu}, \bibinfo{person}{Jian Wu}, \bibinfo{person}{Changhua Pei}, \bibinfo{person}{Xiao Lin}, \bibinfo{person}{Wenwu Ou}, {and} \bibinfo{person}{Peng Jiang}.} \bibinfo{year}{2019}\natexlab{}.
\newblock \showarticletitle{BERT4Rec: Sequential Recommendation with Bidirectional Encoder Representations from Transformer}. In \bibinfo{booktitle}{\emph{Proceedings of the 28th ACM International Conference on Information and Knowledge Management}}. \bibinfo{pages}{1441--1450}.
\newblock


\bibitem[Tran et~al\mbox{.}(2019)]%
        {tran2019improving}
\bibfield{author}{\bibinfo{person}{Viet-Anh Tran}, \bibinfo{person}{Romain Hennequin}, \bibinfo{person}{Jimena Royo-Letelier}, {and} \bibinfo{person}{Manuel Moussallam}.} \bibinfo{year}{2019}\natexlab{}.
\newblock \showarticletitle{Improving Collaborative Metric Learning with Efficient Negative Sampling}. In \bibinfo{booktitle}{\emph{Proceedings of the 42nd International ACM SIGIR Conference on Research and Development in Information Retrieval}}. \bibinfo{pages}{1201--1204}.
\newblock


\bibitem[Vaswani et~al\mbox{.}(2017)]%
        {vaswani2017attention}
\bibfield{author}{\bibinfo{person}{Ashish Vaswani}, \bibinfo{person}{Noam Shazeer}, \bibinfo{person}{Niki Parmar}, \bibinfo{person}{Jakob Uszkoreit}, \bibinfo{person}{Llion Jones}, \bibinfo{person}{Aidan~N Gomez}, \bibinfo{person}{{\L}ukasz Kaiser}, {and} \bibinfo{person}{Illia Polosukhin}.} \bibinfo{year}{2017}\natexlab{}.
\newblock \showarticletitle{Attention is All You Need}.
\newblock \bibinfo{journal}{\emph{Advances in Neural Information Processing Systems}}  \bibinfo{volume}{30} (\bibinfo{year}{2017}).
\newblock


\bibitem[Wang et~al\mbox{.}(2019)]%
        {wang2019sequential}
\bibfield{author}{\bibinfo{person}{Shoujin Wang}, \bibinfo{person}{Liang Hu}, \bibinfo{person}{Yan Wang}, \bibinfo{person}{Longbing Cao}, \bibinfo{person}{Quan~Z Sheng}, {and} \bibinfo{person}{Mehmet Orgun}.} \bibinfo{year}{2019}\natexlab{}.
\newblock \showarticletitle{Sequential Recommender Systems: Challenges, Progress and Prospects}.
\newblock \bibinfo{journal}{\emph{Proceedings of the 28th International Joint Conference on Artificial Intelligence}} (\bibinfo{year}{2019}).
\newblock


\bibitem[Wei et~al\mbox{.}(2021)]%
        {wei2021contrastive}
\bibfield{author}{\bibinfo{person}{Yinwei Wei}, \bibinfo{person}{Xiang Wang}, \bibinfo{person}{Qi Li}, \bibinfo{person}{Liqiang Nie}, \bibinfo{person}{Yan Li}, \bibinfo{person}{Xuanping Li}, {and} \bibinfo{person}{Tat-Seng Chua}.} \bibinfo{year}{2021}\natexlab{}.
\newblock \showarticletitle{Contrastive Learning for Cold-start Recommendation}. In \bibinfo{booktitle}{\emph{Proceedings of the 29th ACM International Conference on Multimedia}}. \bibinfo{pages}{5382--5390}.
\newblock


\bibitem[Wieczorek et~al\mbox{.}(2021)]%
        {wieczorek2021unreasonable}
\bibfield{author}{\bibinfo{person}{Miko{\l}aj Wieczorek}, \bibinfo{person}{Barbara Rychalska}, {and} \bibinfo{person}{Jacek D{\k{a}}browski}.} \bibinfo{year}{2021}\natexlab{}.
\newblock \showarticletitle{On the Unreasonable Effectiveness of Centroids in Image Retrieval}. In \bibinfo{booktitle}{\emph{Neural Information Processing: 28th International Conference, ICONIP 2021, Proceedings, Part IV 28}}. Springer, \bibinfo{pages}{212--223}.
\newblock


\bibitem[Wu et~al\mbox{.}(2020)]%
        {wu2020sse}
\bibfield{author}{\bibinfo{person}{Liwei Wu}, \bibinfo{person}{Shuqing Li}, \bibinfo{person}{Cho-Jui Hsieh}, {and} \bibinfo{person}{James Sharpnack}.} \bibinfo{year}{2020}\natexlab{}.
\newblock \showarticletitle{SSE-PT: Sequential Recommendation via Personalized Transformer}. In \bibinfo{booktitle}{\emph{Proceedings of the 14th ACM Conference on Recommender Systems}}. \bibinfo{pages}{328--337}.
\newblock


\bibitem[Wu et~al\mbox{.}(2019)]%
        {wu2019stochastic}
\bibfield{author}{\bibinfo{person}{Liwei Wu}, \bibinfo{person}{Shuqing Li}, \bibinfo{person}{Cho-Jui Hsieh}, {and} \bibinfo{person}{James~L Sharpnack}.} \bibinfo{year}{2019}\natexlab{}.
\newblock \showarticletitle{Stochastic Shared Embeddings: Data-driven Regularization of Embedding Layers}.
\newblock \bibinfo{journal}{\emph{Advances in Neural Information Processing Systems}}  \bibinfo{volume}{32} (\bibinfo{year}{2019}).
\newblock


\bibitem[Xue et~al\mbox{.}(2017)]%
        {xue2017deep}
\bibfield{author}{\bibinfo{person}{Hong-Jian Xue}, \bibinfo{person}{Xinyu Dai}, \bibinfo{person}{Jianbing Zhang}, \bibinfo{person}{Shujian Huang}, {and} \bibinfo{person}{Jiajun Chen}.} \bibinfo{year}{2017}\natexlab{}.
\newblock \showarticletitle{Deep Matrix Factorization Models for Recommender Systems.}. In \bibinfo{booktitle}{\emph{IJCAI}}, Vol.~\bibinfo{volume}{17}. Melbourne, Australia, \bibinfo{pages}{3203--3209}.
\newblock


\bibitem[Yan et~al\mbox{.}(2021)]%
        {yan2021learning}
\bibfield{author}{\bibinfo{person}{Bencheng Yan}, \bibinfo{person}{Pengjie Wang}, \bibinfo{person}{Kai Zhang}, \bibinfo{person}{Wei Lin}, \bibinfo{person}{Kuang-Chih Lee}, \bibinfo{person}{Jian Xu}, {and} \bibinfo{person}{Bo Zheng}.} \bibinfo{year}{2021}\natexlab{}.
\newblock \showarticletitle{Learning Effective and Efficient Embedding via an Adaptively-masked Twins-based Layer}. In \bibinfo{booktitle}{\emph{Proceedings of the 30th ACM International Conference on Information \& Knowledge Management}}. \bibinfo{pages}{3568--3572}.
\newblock


\bibitem[Yao et~al\mbox{.}(2022)]%
        {yao2022pcl}
\bibfield{author}{\bibinfo{person}{Xufeng Yao}, \bibinfo{person}{Yang Bai}, \bibinfo{person}{Xinyun Zhang}, \bibinfo{person}{Yuechen Zhang}, \bibinfo{person}{Qi Sun}, \bibinfo{person}{Ran Chen}, \bibinfo{person}{Ruiyu Li}, {and} \bibinfo{person}{Bei Yu}.} \bibinfo{year}{2022}\natexlab{}.
\newblock \showarticletitle{PCL: Proxy-based Contrastive Learning for Domain Generalization}. In \bibinfo{booktitle}{\emph{Proceedings of the IEEE/CVF Conference on Computer Vision and Pattern Recognition}}. \bibinfo{pages}{7097--7107}.
\newblock


\bibitem[Yin et~al\mbox{.}(2012)]%
        {yin2012challenging}
\bibfield{author}{\bibinfo{person}{Hongzhi Yin}, \bibinfo{person}{Bin Cui}, \bibinfo{person}{Jing Li}, \bibinfo{person}{Junjie Yao}, {and} \bibinfo{person}{Chen Chen}.} \bibinfo{year}{2012}\natexlab{}.
\newblock \showarticletitle{Challenging the Long Tail Recommendation}.
\newblock \bibinfo{journal}{\emph{Proceedings of the VLDB Endowment 5.9}} (\bibinfo{year}{2012}).
\newblock


\bibitem[Zhang et~al\mbox{.}(2021)]%
        {zhang2021supporting}
\bibfield{author}{\bibinfo{person}{Dejiao Zhang}, \bibinfo{person}{Feng Nan}, \bibinfo{person}{Xiaokai Wei}, \bibinfo{person}{Shangwen Li}, \bibinfo{person}{Henghui Zhu}, \bibinfo{person}{Kathleen McKeown}, \bibinfo{person}{Ramesh Nallapati}, \bibinfo{person}{Andrew Arnold}, {and} \bibinfo{person}{Bing Xiang}.} \bibinfo{year}{2021}\natexlab{}.
\newblock \showarticletitle{Supporting Clustering with Contrastive Learning}.
\newblock \bibinfo{journal}{\emph{Proceedings of the 2021 Conference of the North American Chapter of the Association for Computational Linguistics: Human Language Technologies}} (\bibinfo{year}{2021}).
\newblock


\bibitem[Zhao et~al\mbox{.}(2020)]%
        {zhao2020memory}
\bibfield{author}{\bibinfo{person}{Xiangyu Zhao}, \bibinfo{person}{Haochen Liu}, \bibinfo{person}{Hui Liu}, \bibinfo{person}{Jiliang Tang}, \bibinfo{person}{Weiwei Guo}, \bibinfo{person}{Jun Shi}, \bibinfo{person}{Sida Wang}, \bibinfo{person}{Huiji Gao}, {and} \bibinfo{person}{Bo Long}.} \bibinfo{year}{2020}\natexlab{}.
\newblock \showarticletitle{Memory-efficient Embedding for Recommendations}.
\newblock \bibinfo{journal}{\emph{arXiv preprint arXiv:2006.14827}} (\bibinfo{year}{2020}).
\newblock


\bibitem[Zhou et~al\mbox{.}(2020)]%
        {zhou2020s3}
\bibfield{author}{\bibinfo{person}{Kun Zhou}, \bibinfo{person}{Hui Wang}, \bibinfo{person}{Wayne~Xin Zhao}, \bibinfo{person}{Yutao Zhu}, \bibinfo{person}{Sirui Wang}, \bibinfo{person}{Fuzheng Zhang}, \bibinfo{person}{Zhongyuan Wang}, {and} \bibinfo{person}{Ji-Rong Wen}.} \bibinfo{year}{2020}\natexlab{}.
\newblock \showarticletitle{S3-Rec: Self-supervised Learning for Sequential Recommendation with Mutual Information Maximization}. In \bibinfo{booktitle}{\emph{Proceedings of the 29th ACM International Conference on Information \& Knowledge Management}}. \bibinfo{pages}{1893--1902}.
\newblock


\bibitem[Zhu et~al\mbox{.}(2021)]%
        {zhu2021learning}
\bibfield{author}{\bibinfo{person}{Yongchun Zhu}, \bibinfo{person}{Ruobing Xie}, \bibinfo{person}{Fuzhen Zhuang}, \bibinfo{person}{Kaikai Ge}, \bibinfo{person}{Ying Sun}, \bibinfo{person}{Xu Zhang}, \bibinfo{person}{Leyu Lin}, {and} \bibinfo{person}{Juan Cao}.} \bibinfo{year}{2021}\natexlab{}.
\newblock \showarticletitle{Learning to Warm Up Cold Item Embeddings for Cold-start Recommendation with Meta Scaling and Shifting Networks}. In \bibinfo{booktitle}{\emph{Proceedings of the 44th International ACM SIGIR Conference on Research and Development in Information Retrieval}}. \bibinfo{pages}{1167--1176}.
\newblock


\end{thebibliography}

% -------------------------------------------------------------------------
% -------------------------------------------------------------------------

\clearpage
\appendix

\section{Hyper-parameter Tuning}

    As a regularization, the weight decay is tuned among \{0.0, 0.1, 0.2, 0.3\} in AdamW \cite{loshchilov2018fixing} optimizer with learning rate of 0.0001, and the random item replace technique from \cite{wu2019stochastic} is used with the probability tuned among \{0.0, 0.01, 0.02\}.
    The maximum length of input sequence mostly depends on the dataset; however, the performance gain converges when it reaches a certain value, so we used similar values reported in \cite{rashed2022context}.
    The other hyper-parameters are tuned as follows: dropout probability in \{0.0, 0.05, 0.1, 0.15, 0.2\}, normalization layers as either LayerNorm \cite{ba2016layer} or $L_{2}$-norm, the number of self-attention blocks $B$ in \{1, 2, 3, 4\}, the number of attention heads $H$ in \{1, 2, 4, 8\}, random sequence cut probability in \{0.0, 0.5, 1.0\}, temperature $\tau$ of NT-Xent in \{0.01, 0.1, 1.0\}, and the number of frequent items $K$ in \{0.0, 0.1, 0.25, 0.5, 0.75, 1.0\} times the number of total items.
    Note that most hyper-parameters are not sensitive when they reside on a certain value range.

% --- figure start ---
\begin{figure}
    \centering
    \begin{subfigure}[b]{0.5\linewidth}
        \centering
        \includegraphics[width=1.0\linewidth]{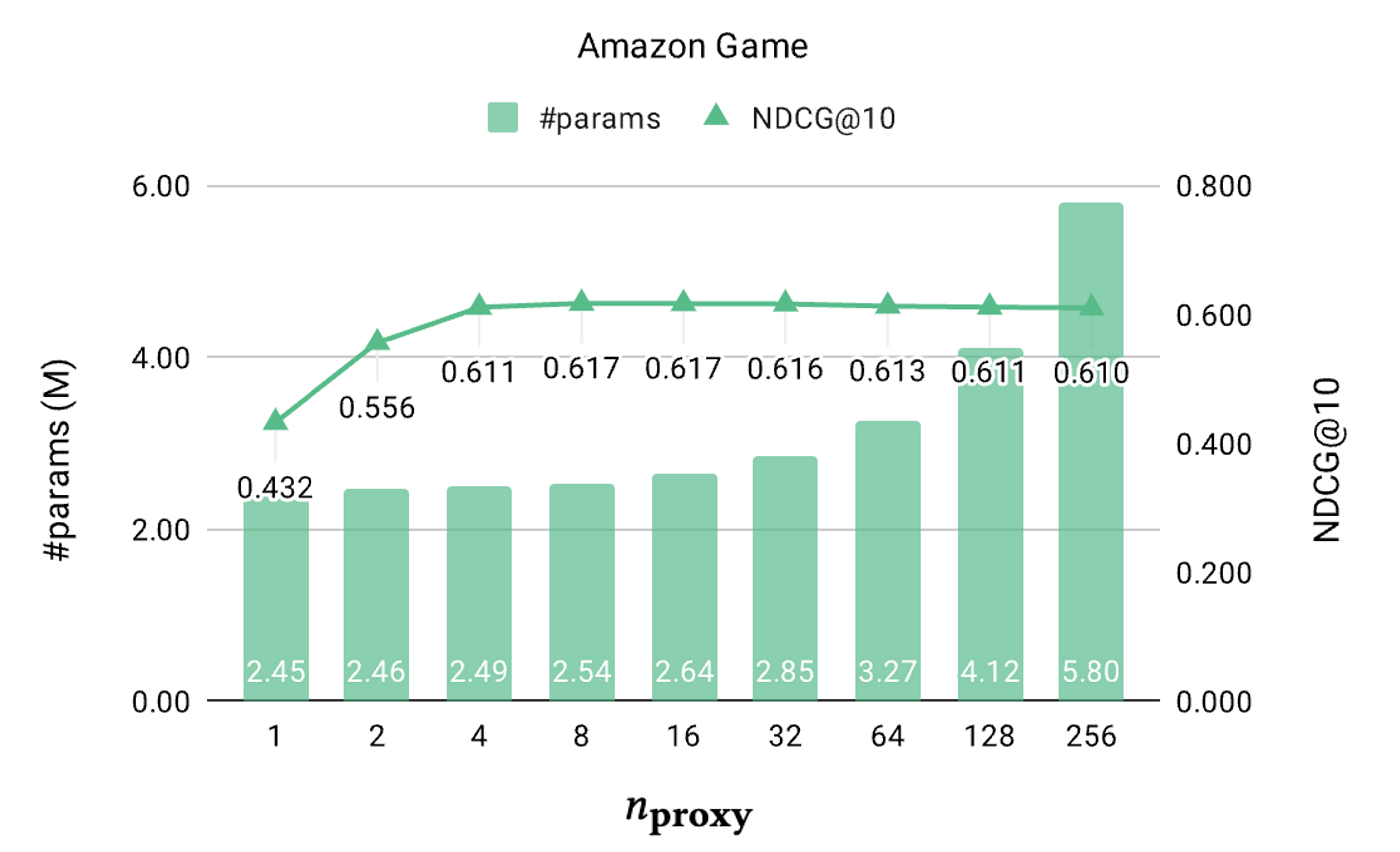}
        \caption{Impact of $n_{\text{proxy}}$.}
        \label{fig:ablation-proxy}
    \end{subfigure}%
    \begin{subfigure}[b]{0.5\linewidth}
        \centering
        \includegraphics[width=1.0\linewidth]{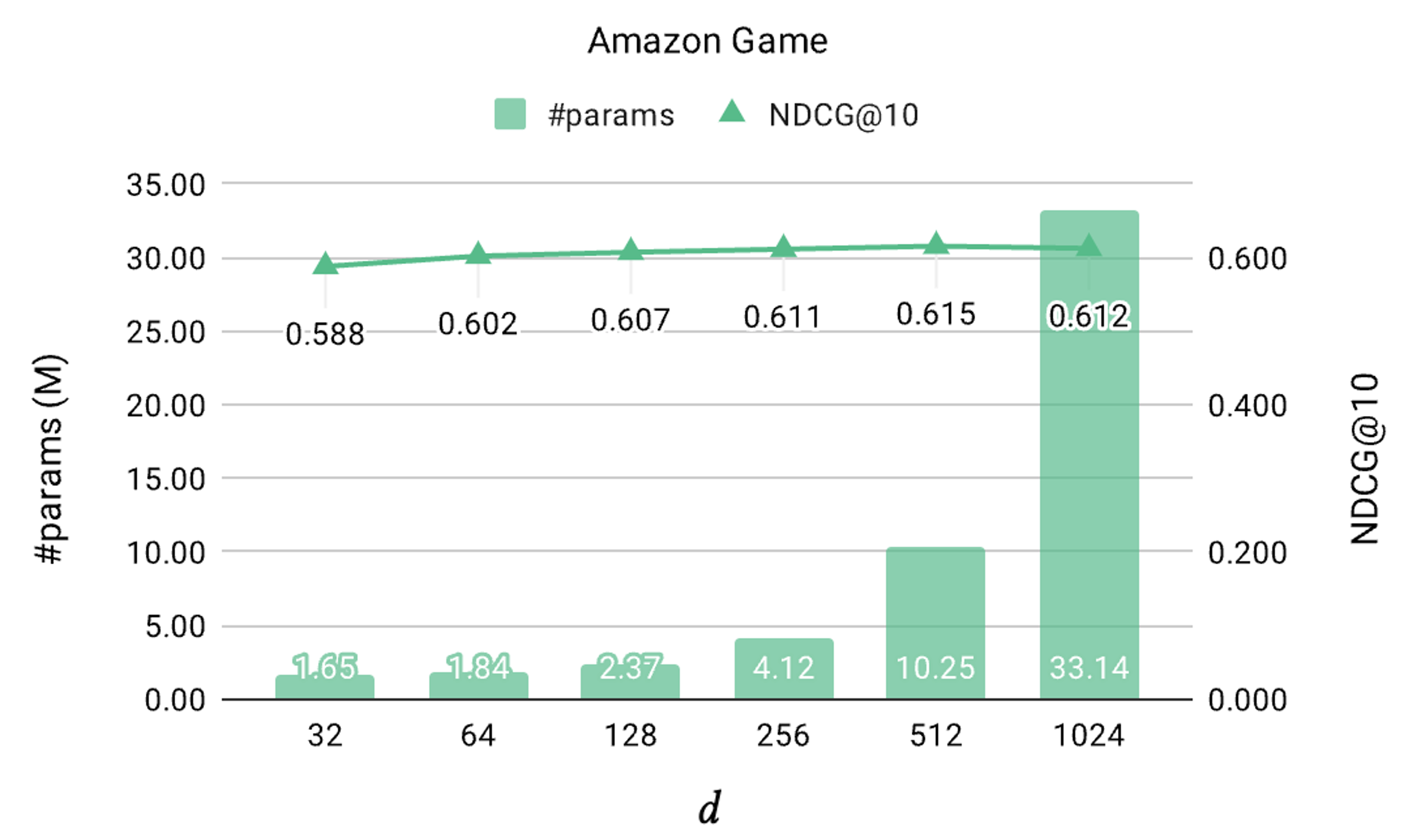}
        \caption{Impact of $d$.}
        \label{fig:ablation-hidden}
    \end{subfigure}
    \caption{Impact of hyper-parameters on Game.}
    \label{fig:ablation}
\end{figure}
% --- figure end ---

\section{Ablation Study}

    The main hyper-parameters of PIR are $n_{\text{proxy}}$ and a latent dimension $d$, where these two primarily determines the number of parameters.
    As shown in Figure \ref{fig:ablation-proxy}, the model produces a similar performance once it reaches a certain number of proxies.
    Similarly, for the latent dimension, as demonstrated in Figure \ref{fig:ablation-hidden}, it shows that the model is surprisingly robust to the dimension.
    This suggests that besides increasing the model parameters in the recommendation model, there is still room for further exploration of the model structure to improve the performance \cite{ardalani2022understanding}.

\section{Properties of Proxy-based Item Representation}

    Here, we prove the representation propositions proposed in the main section.
    Since the model network is complex, we indirectly claim that the concept of content locality and bias priority are sound.

    \begin{proposition}{}
        The proxy-based item representation without bias term has content locality.
    \end{proposition}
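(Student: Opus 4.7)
The plan is to exhibit the map $(f,c) \mapsto \text{PIR}$, with the bias term removed, as a composition of Lipschitz-continuous operations, and then to translate that Lipschitz bound into the desired locality statement. First I would make the content-locality claim quantitative: for any two items $i,j$ with attributes $f_i, f_j$ and contexts $c_i, c_j$, I aim to establish a bound of the form $\|\text{PIR}_{i} - \text{PIR}_{j}\| \leq L \cdot \|\text{cat}_{\text{col}}(f_i, c_i) - \text{cat}_{\text{col}}(f_j, c_j)\|$, where $L$ depends only on the trained parameters $W^{\varphi,(1)}, W^{\varphi,(2)}$ and the proxy matrix $P$. In particular, when the contexts coincide, closeness in attribute space transfers to closeness in representation space.

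The execution is a Lipschitz-chain argument on the five stages that define $\text{PIR}_{i_t}$ under the $K=0$ regime: (i) the affine map $x \mapsto xW^{\varphi,(1)} + b^{\varphi,(1)}$, Lipschitz with constant $\|W^{\varphi,(1)}\|_{\mathrm{op}}$; (ii) the LeakyReLU activation, which is globally $1$-Lipschitz; (iii) the affine map with constant $\|W^{\varphi,(2)}\|_{\mathrm{op}}$; (iv) the softmax, which is $1$-Lipschitz in the $\ell_2$ norm (standard bound via the operator norm of its Jacobian $\operatorname{diag}(\hat w) - \hat w \hat w^\top$); and (v) the linear map $\hat w \mapsto \hat w P$ with constant $\|P\|_{\mathrm{op}}$. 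Composing the constants yields $L = \|P\|_{\mathrm{op}} \cdot \|W^{\varphi,(2)}\|_{\mathrm{op}} \cdot \|W^{\varphi,(1)}\|_{\mathrm{op}}$, and the inequality above follows.

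The step I expect to require the most care is the softmax bound, since the coordinate-wise analysis is loose and one must invoke the spectral radius of its Jacobian to keep the constant independent of the input; this is standard but worth stating explicitly. The larger conceptual obstacle, however, is not the inequality itself but interpreting why the bound is meaningful for \emph{infrequent} items specifically. The point I would stress at the end is that, without the bias term, every item is routed through the \emph{same} map $\varphi$ trained against \emph{all} items jointly, so an infrequent item that shares attributes with well-represented items inherits a nearly identical representation by the Lipschitz bound. This contrasts sharply with the baseline full-lookup embedding, where infrequent items are assigned independent parameters that receive few gradient updates, so attribute proximity does not translate into embedding proximity. This observation is what closes the gap between the raw Lipschitz inequality and the intended notion of content locality.
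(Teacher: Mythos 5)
Your proposal is correct, and it takes a genuinely different route from the paper's own argument. The paper proves content locality by an explicit local computation on a highly restricted perturbation: it takes two pre-softmax content vectors that agree in all but two coordinates, shifts one coordinate by $+\epsilon$ and the other by $-\delta$ subject to the constraint that the softmax normalizer $Z$ is unchanged, solves for $\delta$ in terms of $\epsilon$, and then explicitly evaluates $\lVert \mathrm{softmax}(f^{(1)}) - \mathrm{softmax}(f^{(2)})\rVert$, finishing with the matrix-norm bound $\lVert wP\rVert \le \lVert P\rVert\,\lVert w\rVert$ and a second-order Taylor approximation to conclude the gap is roughly proportional to $\epsilon$. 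Notably, the paper's proof starts from the vectors ``right before the proxy weight computation,'' so it does not analyze the weighting network $\varphi$ at all. Your Lipschitz-chain argument subsumes this: by composing the operator-norm bounds for the two affine layers, the $1$-Lipschitz LeakyReLU, the $1$-Lipschitz softmax (via the Jacobian $\operatorname{diag}(\hat w) - \hat w\hat w^{\top}$, whose operator norm is at most $\max_r \hat w_r \le 1$), and the map $\hat w \mapsto \hat w P$, you obtain a uniform constant $L = \lVert P\rVert_{\mathrm{op}}\lVert W^{\varphi,(2)}\rVert_{\mathrm{op}}\lVert W^{\varphi,(1)}\rVert_{\mathrm{op}}$ valid for arbitrary attribute and context perturbations, with no equal-normalizer assumption and no approximation step. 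What the paper's version buys is an explicit, input-dependent constant ($\sqrt{2}\lVert P\rVert Z^{-1}e^{g}$) that makes the $\epsilon$-scaling concrete in a worked example; what yours buys is generality, an input-independent constant, and coverage of the full $(f,c)\mapsto \mathrm{PIR}$ pipeline. Your closing remark — that the bound matters precisely because all items share the map $\varphi$ and the proxy matrix $P$, unlike per-item lookup rows that receive few gradient updates — is exactly the interpretation the paper advances in its main text, so it closes the same gap between the inequality and the informal proposition.
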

    \begin{proof}
        Let $f^{(1)}, f^{(2)} \in \mathbb{R}^{n}$ be the content vectors of two items, right before the proxy weight computation.
        For simplicity, we assume that $\sum_{i=1}^{d} e^{f^{(1)}_{i}} = \sum_{i=1}^{d} e^{f^{(2)}_{i}} = Z$ and $f^{(1)}_{i} = f^{(2)}_{i}$ for $i = 3, 4, ..., d$, and $f^{(1)}_{1} = f^{(1)}_{2} = g$.
        Without loss of generality, $f^{(2)}_{1} = f^{(1)}_{1} + \epsilon$ and $f^{(2)}_{2} = f^{(1)}_{2} - \delta$ hold for some $\epsilon, \delta > 0$.
        Using the assumption, we can derive the $\delta$ as follow:
        \begin{align*}
            & \,\,\,\,
                \sum_{i=1}^{d} e^{f^{(1)}_{i}} = \sum_{i=1}^{d} e^{f^{(2)}_{i}} \\
            \Leftrightarrow & \,\,\,\,
                e^{f^{(1)}_{1}} + e^{f^{(1)}_{2}} = e^{f^{(1)}_{1} + \epsilon} + e^{f^{(1)}_{2} - \delta} \\
            \Leftrightarrow & \,\,\,\,
                2e^{g} = e^{g + \epsilon} + e^{g - \delta} \\
            \Leftrightarrow & \,\,\,\,
                \delta = -\log(2 - e^{\epsilon}), \\
        \end{align*}
        with conditions of $\epsilon < \log(2 - e^{-g})$.
        Now let $P \in \mathbb{R}^{n \times d}$ be proxy embeddings, then since the proxy weight is determined by the softmax of the content vectors, the item representation becomes $v = \text{softmax}(f)P$.
        Then the difference between two item representations is bounded above by using $\epsilon$:
        \begin{align*}
            & \,\,\,\,
                \left\| v^{(1)} - v^{(2)} \right\| \\
            = & \,\,\,\,
                \left\| \text{softmax}(f^{(2)})P - \text{softmax}(f^{(2)})P \right\| \\
            \leq & \,\,\,\,
                \left\| P \right\|
                \left\| \text{softmax}(f^{(2)}) - \text{softmax}(f^{(2)}) \right\|
                &&\text{(matrix norm)}
                \\
            = & \,\,\,\,
                \left\| P \right\|
                Z^{-1}
                e^{g}
                \left(
                    (1 - e^{\epsilon})^2
                    +
                    (1 - e^{-\delta})^2
                \right)^{\frac{1}{2}}
                \\
            = & \,\,\,\,
                \sqrt{2}
                \left\| P \right\|
                Z^{-1}
                e^{g}
                \left(
                    e^{2\epsilon} - 2e^{\epsilon} + 1
                \right)^{\frac{1}{2}}.
                \\
        \end{align*}
        Note that with Taylor expansion up to the second order, we can approximate $\left(e^{2\epsilon} - 2e^{\epsilon} + 1\right)^{\frac{1}{2}} \approx \epsilon$.
    \end{proof}

    \begin{proposition}{}
        The proxy-based item representation with bias term has bias priority.
    \end{proposition}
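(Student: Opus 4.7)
The plan is to formalize \emph{bias priority} as the statement that by appropriately choosing $b^{\text{freq}}_{i_t}$, the post-softmax proxy weight $\hat{w}_{i_t}$ can be steered to any point of the open probability simplex on $n_{\text{proxy}}$ atoms, regardless of the content-driven logits $w_{i_t} = \varphi(f_{i_t}, c_t)$. Combined with $\text{PIR}_{i_t} = \hat{w}_{i_t} P$, this will show that the bias term can completely override the content contribution to the representation, which is the operational meaning of ``priority'' here.

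The construction I would use is direct. Fix any target distribution $\pi \in \mathbb{R}^{n_{\text{proxy}}}$ with $\pi_r > 0$ and $\sum_r \pi_r = 1$, and any pre-bias logit vector $w \in \mathbb{R}^{n_{\text{proxy}}}$. Define
\begin{equation*}
    b^{\text{freq}}_r := \log \pi_r - w_r, \qquad r = 1, \ldots, n_{\text{proxy}}.
\end{equation*}
Then $(w + b^{\text{freq}})_r = \log \pi_r$ coordinate-wise, so $\hat{w}_{i_t} = \text{softmax}(w + b^{\text{freq}}) = \pi$ because $\text{softmax}(\log \pi) = \pi$ for any probability vector. Substituting back yields $\text{PIR}_{i_t} = \pi P = \sum_r \pi_r P_r$, which depends on neither $w$ nor $(f_{i_t}, c_t)$.

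Two subtleties are worth recording. First, the $b^{\text{freq}}$ realizing a given $\pi$ is unique only up to adding a common constant to every coordinate, since softmax is shift-invariant; I would note this to make clear that a whole affine line of biases achieves the same priority, matching the actual learnability of $b^{\text{freq}}$. Second, the construction requires $\pi_r > 0$, so vertices of the simplex are attained only in the limit $b^{\text{freq}}_r \to -\infty$ on suppressed coordinates. To cover the closed simplex I would append a quantitative Lipschitz-style bound
\begin{equation*}
    \left\| \text{PIR}_{i_t} - \pi P \right\| \leq \left\| P \right\| \cdot \left\| \text{softmax}(w + b^{\text{freq}}) - \pi \right\|,
\end{equation*}
obtained from the same matrix-norm inequality already used in the content locality proof, so that targets on the boundary can be approximated arbitrarily well.

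The hard part, I expect, is conceptual rather than computational: the proposition must be framed as a statement about the expressive capacity of the architecture — that for every content input there exists a bias realizing any target $\pi$ — rather than a statement about what jointly training $w$ and $b^{\text{freq}}$ actually produces. Staying on the representational side keeps the argument purely algebraic and parallels the style of the content locality proposition, avoiding any need to reason about the downstream layers $\phi^{\text{AC}}$ and $\phi^{\text{item}}$, which simply act on the already-determined $\text{PIR}_{i_t}$.
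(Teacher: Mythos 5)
Your proposal is correct and captures the same underlying idea as the paper's proof --- that the additive bias enters the softmax on equal footing with the content logits and can therefore dominate them --- but your execution is both more general and more rigorous. The paper argues only the extreme case: it sets bias vectors informally to $[\infty, 0, \ldots, 0]$ and $[0, \infty, 0, \ldots, 0]$ so that two items with identical content collapse onto single proxies $P_1$ and $P_2$, i.e.\ it exhibits two one-hot targets reachable ``in the limit'' without ever writing a finite bias. Your construction $b^{\text{freq}}_r = \log \pi_r - w_r$ shows that \emph{every} interior point of the simplex is exactly attainable with finite biases for \emph{any} content logits, recovers the paper's vertex case as the boundary limit, and adds the shift-invariance remark and the matrix-norm bound to handle closure. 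What the paper's version buys is brevity and a vivid picture (same content, completely different clusters); what yours buys is an actual theorem about expressive capacity that does not rely on infinite parameters, and it correctly flags that the claim is representational rather than a statement about what training converges to --- a caveat the paper omits.
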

    \begin{proof}
        Even with the same content vector, the bias term can control the proxy weight if necessary.
        If two bias terms are $b^{(1)} = [\infty, 0, 0, ..., 0]$ and $b^{(1)} = [0, \infty, 0, ..., 0]$, then the weights are $w^{(1)} = \text{softmax}(f^{(1)} + b^{(1)}) = [1, 0, 0, ..., 0]$ and $w^{(2)} = \text{softmax}(f^{(2)} + b^{(2)}) = [0, 1, 0, ..., 0]$.
        Then, the final item representation is given as $w^{(1)}P = P_{1}$ and $w^{(2)}P = P_{2}$ respectively, which can denote completely different clusters.
    \end{proof}

% --- table start ---
\begin{table*}[tp]
    \caption{Performance comparison on Amazon datasets, including baseline models with NIP task. The best and the second best results are marked as bold and italic numbers respectively. The asterisk(*) denotes statistically significant (p < 0.05) gain against the non-PIR counterpart, using the $t$-test.}
    \label{tab:appendix-main}
    \begin{tabular}{cl|cccccccc}
        \toprule
            \multirow{2}{*}{Task} & \multirow{2}{*}{Model} & \multicolumn{2}{c}{Fashion} & \multicolumn{2}{c}{Men} & \multicolumn{2}{c}{Beauty} & \multicolumn{2}{c}{Game} \\
            & & R@10 & N@10 & R@10 & N@10 & R@10 & N@10 & R@10 & N@10 \\
        \midrule
            & Popular    & 0.407 & 0.262 & 0.415 & 0.269 & 0.451 & 0.261 & 0.519 & 0.314 \\
        \midrule
            \multirow{6}{*}{\begin{tabular}{c}NIP\end{tabular}}
            & BERT4Rec   & 0.328 & 0.209 & 0.315 & 0.193 & 0.478 & 0.318 & 0.705 & 0.509 \\
            & SASRec     & 0.381 & 0.245 & 0.397 & 0.259 & 0.485 & 0.322 & 0.742 & 0.541 \\
            & SSE-PT     & 0.381 & 0.246 & 0.397 & 0.258 & 0.502 & 0.337 & 0.775 & 0.566 \\
            & S$^{3}$Rec & 0.367 & 0.239 & 0.365 & 0.238 & 0.538 & 0.371 & 0.765 & 0.549 \\
            & SASRec++   & 0.546 & 0.344 & 0.500 & 0.315 & 0.545 & 0.351 & 0.752 & 0.533 \\
            & CARCA      & 0.591 & 0.381 & 0.550 & 0.349 & 0.579 & 0.396 & {\ul 0.782} & 0.573 \\
        \midrule
            \multirow{2}{*}{\begin{tabular}{c}BPR\end{tabular}}
            & BPR++
                & 0.523 & 0.332 & 0.429 & 0.266 & 0.505 & 0.352 & 0.768 & 0.564 \\
            & $\, \hookrightarrow$ with \textbf{PIR}
                & 0.620* & 0.406* & 0.554* & 0.358* & 0.511 & 0.353 & 0.770 & {\ul 0.582}* \\
        \midrule
            \multirow{5}{*}{\begin{tabular}{c}LIP\end{tabular}}
            & MixDim++
                & 0.623 & 0.407 & 0.570 & 0.365 & 0.587 & 0.398 & 0.766 & 0.556 \\
            & SASRec++
                & 0.630 & 0.416 & 0.587 & 0.379 & 0.601 & 0.415 & 0.753 & 0.539 \\
            & $\, \hookrightarrow$ with \textbf{PIR}
                & 0.635 & 0.426* & 0.580 & 0.381 & 0.599 & 0.422 & 0.779* & 0.572* \\
            & CARCA
                & {\ul 0.648} & {\ul 0.427} & {\ul 0.614} & {\ul 0.398} & {\ul 0.608} & {\ul 0.423} & 0.762 & 0.560 \\
            & $\, \hookrightarrow$ with \textbf{PIR} (\textbf{ProxyRCA})
                & \textbf{0.661}* & \textbf{0.446}* & \textbf{0.617} & \textbf{0.408} & \textbf{0.626}* & \textbf{0.449}* & \textbf{0.809}* & \textbf{0.611}* \\
        \midrule
        \midrule
            \multicolumn{2}{l}{Improvement over previous state-of-the-art} & 11.9\% & 17.0\% & 12.2\% & 16.9\% & 8.1\% & 13.4\% & 3.5\% & 6.6\% \\
        \bottomrule
    \end{tabular}
\end{table*}
% --- table end ---

\section{Additional Performance Comparison}

    \subsection{Comparison Models}

        Here we explain the all baseline models, including models with NIP task.
    
        \begin{enumerate}
            \item Popular: A non-personalized recommendation where the preference score is based on the item's global popularity. It serves as a sanity check for the performance lower bound.
            \item BERT4Rec \cite{sun2019bert4rec}: A sequential recommendation model using masked language model task from BERT with a bidirectional Transformer encoder.
            \item SASRec \cite{kang2018self}: A model that performs autoregressive sequential recommendation based on a unidirectional Transformer encoder.
            \item SSE-PT \cite{wu2020sse}: An extension of SASRec that strengthens personalization by applying the stochastic shared embedding technique on user embeddings.
            \item S$^3$Rec \cite{zhou2020s3}: An improved version of SASRec that adds various self-supervised tasks to leverage item attribute information further.
            \item SASRec++ \cite{rashed2022context}: An extension of SASRec \cite{kang2018self} that utilizes item attributes and context. We experimented two versions that uses NIP and LIP tasks respectively.
            \item CARCA \cite{rashed2022context}: A state-of-the-art ACSR model that uses cross-attention as an item scoring layer. Similar to SASRec++, we also compared two versions of NIP and LIP.
            \item BPR++ (ours): An extension of the non-sequential model BPR \cite{rendle2009bpr} that utilizes item attributes and context, with LIP task which uses multiple negative samples.
            \item MixDim++ (ours): An extension of CARCA with parameter-efficient mixed-dimension embedding \cite{ginart2021mixed} for encoding infrequent items. Note that this is an important baseline, which represents models that handle infrequent items (either long-tail or cold-start).
            \item ProxyRCA (ours): This is our main proposed model that employs proxy-based item representation as the item encoding layer on top of CARCA, named after \textbf{Proxy}-based item representation \textbf{R}ecommendation model with \textbf{C}ross-\textbf{A}ttention.
        \end{enumerate}

    \subsection{Overall Performance Comparison}

        Table \ref{tab:appendix-main} shows the overall performance comparison, including baseline models that use NIP task.
        The results show that LIP task alone can contribute to the performance improvement (by comparing NIP and LIP versions of CARCA), but when the task is fixed, PIR independently contributes to the performance.

% -------------------------------------------------------------------------
% -------------------------------------------------------------------------

%# bib
% \bibliographystyle{reference}
% \bibliography{reference}

% -------------------------------------------------------------------------
% -------------------------------------------------------------------------

\end{document}